\providecommand{\U}[1]{\protect\rule{.1in}{.1in}}
\providecommand{\U}[1]{\protect\rule{.1in}{.1in}}
\theoremstyle{plain}
\newtheorem{theo}{Theorem}[section]
\newtheorem{proposition}{Proposition}[section]
\newtheorem{definition}{Definition}[section]
\theoremstyle{definition}
\theoremstyle{remark}
\newtheorem{rem}{Remark}[section]
\begin{document}

\title{{\LARGE \textbf{Algorithms for Finding Copulas \\\vspace{2mm}
Minimizing Convex Functions of
Sums}}}
\author{\textsc{Carole Bernard} \thanks{ C. Bernard is with the department of
Accounting, Law and Finance at Grenoble Ecole de Management, Email \texttt{
carole.bernard@grenoble-em.com}. } and \textsc{Don McLeish} \thanks{D.L. McLeish
is with the department of Statistics and Actuarial Science at the University
of Waterloo, Email \texttt{dlmcleis@uwaterloo.ca}. }}
\date{\today}
\maketitle

\begin{abstract} 
We develop improved rearrangement algorithms to find the dependence structure
that minimizes a convex function of the sum of dependent variables with given
margins. We propose a new multivariate dependence measure, which can assess the
convergence of the rearrangement algorithms and can be used as a stopping
rule. We show how to apply these algorithms for example to finding the dependence among variables for which the marginal distributions and the distribution of the sum or the difference are known. As an example, we can find the dependence between two uniformly distributed variables that makes the
distribution of the sum of two uniform variables indistinguishable from a
normal distribution. Using MCMC techniques, we  design an algorithm that converges to the global optimum. 
\end{abstract}

Key-words: Multivariate risk measure, Block RA, Rearrangement algorithm, minimum variance, discrete optimization, MCMC, copula.

\newpage

\begin{center}
{\LARGE \textbf{Algorithms for Finding Copulas\\\vspace{4mm} Minimizing
Convex Functions of Sums}}
\end{center}

\vspace{2cm}

\section*{Introduction}

For specified marginal distributions such as the uniform or the normal distribution, can we
find a dependence structure or copula, which provides a specific distribution
for the sum of $n$ variables? What if we were to require that the sum
be constant? Questions like this have been addressed theoretically in the
literature in a number of papers 
with the concept of complete mixability (\cite{WW11}, \cite{puccetti2014general}, \cite{wang2015joint} to cite only a few), and computationally,
with the rearrangement algorithm (RA) (\cite{Puccetti-Rueschendorf-2012},
\cite{Embrechts-et-al-2013}). The RA aims to minimize the expectation of a convex
function of a sum of random variables (including the case of minimization of
the variance of the sum as a special case).\footnote{More details on the RA can be found at \url{https://sites.google.com/site/rearrangementalgorithm/}.} This algorithm is fast and simple
but may not converge to the global minimum. In particular, it does not depend
on the convex function to minimize. Our main objective in this paper is to
further this discussion by developing an improved version of this algorithm.

The minimization of convex functions of a sum of dependent random variables can be
formulated using a matrix $\mathbf{X}:=(X_{ij})_{i,j}$, and is linked to the problem of
minimizing a convex function of the row sums $\sum_{j=1}^{n}X_{ij}$ over all
permutations within the columns. It is a highly computationally complex
problem, as even in the special case of $n=3$ columns, it has been shown to be
NP-complete (\cite{Haus}). It means that no algorithm will guarantee
convergence to the optimum in polynomial time. Enumeration might be
considered, but for reasonable sized matrices this is also not feasible. For a
$m\times n$ matrix, the number of essentially distinct matrices that can be
obtained is the number of permutations of the columns other than the first,
which is $(m!)^{n-1}.$ For example, for a small $10\times6$ matrix this is
$(10!)^{5}=6.292\,4\times10^{32},$ obviously completely impossible by
enumeration. Various versions of this problem have been treated in the
literature, and algorithms proposed for special cases, but because the problem
is NP complete (see \cite{Hsu}), these algorithms do not converge to an
optimal. For example \cite{CY} propose an
algorithm designed to minimize the maximum of the row sums (the
\textit{assembly line crew scheduling} problem) and show that this algorithm
converges to a solution less than 1.5 times the optimal. The NP complete nature of
the problem indicates that the worst case performance of algorithms may be
unsatisfactory, but it is still possible to develop algorithms, which normally
find the optimum very quickly. We  present Markov Chain algorithms here, which
guarantee finding the global optimum in finite time, usually very rapidly. 

There are many more applications for the Rearrangement Algorithm (RA) of
\cite{Puccetti-Rueschendorf-2012}. It has been used successfully in recent
advances to the risk management field. Specifically, the RA is used to measure
model risk on dependence, also called \textquotedblleft dependence
uncertainty,'' and can help regulators to make decisions on
which risk measure is most appropriate to compute capital requirements (see
\cite{Embrechts-et-al-2014}). It was successfully used to approximate VaR
bounds on the sum of $n$ dependent risks with given marginal distributions by
\cite{Embrechts-et-al-2013} by applying the RA to the largest rows of the
matrix such that all risks are comonotonic. Many more applications of this RA have
been recently developed. Among others, \cite{aas2013bounds} use the standard
RA to compute capital requirements of DNB bank,
\cite{Bernard-Ru-Vanduffel-2014}, \cite{bernard2015robust} to assess portfolios' credit risk and \cite{Bernard-Vanduffel-2014}
to incorporate partial information on dependence in the computation of bounds
on capital requirements, \cite{BJW14} to derive bounds on convex risk measures
and quantify dependence uncertainty and \cite{puccettidetecting} to detect
complete mixability. As we expect many more applications of such algorithms,
it is important to develop efficient and accurate algorithms that converge to
a minimum, a feature of our algorithms.

In this paper, we develop a novel application. We  are able to find the dependence structure such that the sum of
dependent variables has a prescribed distribution. We will refer to two
distributions $F$ and $G$ as \textquotedblleft close" if a large sample from
one, say from $G$, cannot be detected as \textit{not} coming from $F$ with high
probability using a standard test. We will use the Kolmogorov-Smirnov test and
the Wasserstein distance test. This allows us, for example, to find two
dependent Uniform[0,1] random variables $U_{i}$ such that $U_{1}%
+U_{2}-Z_{m,\sigma^{2}}$ is nearly 0 where $Z_{m,\sigma^{2}}$ is a normally
distributed variable with mean $m$ and variance $\sigma^{2}$, in other words
so that $U_{1}+U_{2}$ is nearly normal.

This toy example illustrates a potential use of the methodology to infer the dependence among variables that can explain a given distribution for the sum. At first, it may sound limited and more a mathematical curiosity but this methodology can also be useful in practice. For example, it can be  used in finance.  Assuming that prices of basket options or spread options are available at the same time as prices of regular options written on individual stocks, our methodology can then be useful to infer a multivariate model for the assets that is consistent with this information. It is then  particularly interesting in fitting the bivariate distribution between  gas and electricity prices given that spark spread options are actively traded (see  \cite{alexander2004bivariate}, \cite{carmona2003pricing}, \cite{rosenberg2000nonparametric}).


The paper is organized as follows. In Section \ref{S1}, we present a new
multivariate measure inspired by the notion of $\Sigma$-countermonotonicity
discussed by \cite{puccetti2014general}. Then, in Section \ref{S2}, we develop
improved rearrangement algorithms that may use this multivariate dependence
measure as a stopping rule and discuss their relative performance. Our
algorithms can converge in fewer steps by selecting the blocks optimally and
converge to a point much closer to the global optimum than the standard RA,
often by orders of magnitude.  Section \ref{S3} illustrates the methodology
with the explicit construction of the dependence that makes the sum of two
uniformly distributed variables indistinguishable from a normal distribution.
More generally, we are interested in whether a copula exists such that the sum
of $m$ random variables from one distribution has another prescribed distribution. We then briefly discuss an application to finance. Finally, in Section \ref{SMCMC}, we show how to modify
the block RA using Markov Chain Monte Carlo methods to guarantee finding the
global minimum in finite time and accounting for a given convex measure of the sum.

\section{A new multivariate measure of dependence\label{S1}}

In this section, we propose to extend any dependence measure defined between
two random variables to a multivariate dependence measure in a natural way.

\subsection{A new multivariate measure based on $\Sigma$-countermonotonicity}

This multivariate measure will play a crucial role in assessing the
convergence of the rearrangement algorithm that minimizes the variance of the
sum of dependent risks with given marginals (\cite{Puccetti-Rueschendorf-2012}
and \cite{Embrechts-et-al-2013}). It is inspired by the recent notion of
$\Sigma$-countermonotonicity introduced by \cite{puccetti2014general} in
which all sums over disjoint subsets $\Pi$ and $\bar\Pi$ such that $\Pi
\cup\bar\Pi=\{1,2,...,n\}$ are countermonotonic (see also \cite{lee2014multidimensional}).

\begin{definition}
Let $\phi\left(  \mathbf{X}_{1},\mathbf{X}_{2}\right)  $ be a measure of
dependence between two columns of data $\mathbf{X}_{1}$ and $\mathbf{X}_{2}$
such as Spearman's rho, Kendall's tau, or Pearson correlation coefficient. For
a matrix of data $\mathbf{X}=[\mathbf{X}_{1},\mathbf{X}_{2},...,\mathbf{X}%
_{n-1},\mathbf{X}_{n}]$ with $n$ columns, we define the multivariate measure
of dependence
\begin{equation}
\varrho(\mathbf{X}):=\frac{1}{2^{n-1}-1}\sum_{\Pi\in\mathcal{P}}\phi\left(
\sum_{i\in\Pi}\mathbf{X}_{i},\sum_{i\in\bar{\Pi}}\mathbf{X}_{i}\right)
\label{measure1}%
\end{equation}
where the sum is over the set $\mathcal{P}$ consisting of $2^{n-1}-1$ distinct
partitions of $\{1,2,...,n\}$ into \textbf{non-empty} subsets $\Pi$ and its
complement $\bar{\Pi}.$\footnote{There are $2^{n}$ partitions $(\Pi,\bar{\Pi
})$ so that $\{1,2,...,n\}=\Pi\cup\bar{\Pi}$ and $\Pi\cap\bar{\Pi}=\emptyset$.
But the measure $\phi\left(  \sum_{i\in\Pi}\mathbf{X}_{i},\sum_{i\in\bar{\Pi}%
}\mathbf{X}_{i}\right)  $ is usually meaningless when either $\Pi$ or
$\bar{\Pi}$ are empty. Moreover the partition ($\Pi$ , $\bar{\Pi})$ is
essentially counted twice. So there are $(2^{n}-2)/2$ relevant
\textquotedblleft distinct\textquotedblright\ partitions into non-empty subsets.}
\end{definition}

For the remainder of the paper, we assume that $\phi$ is the Spearman
correlation. Let us recall its definition for two continuous random variables
$X$ and $Y$ with respective marginal c.d.f. $F_{X}$ and $F_{Y}$. The Spearman
correlation is then equal to
\begin{equation}
\label{SRho}\phi(X,Y):=\frac{\hbox{cov}(F_{X}(X),F_{Y}(Y))}{\sqrt
{\hbox{var}(F_{X}(X))\hbox{var}(F_{Y}(Y))}},
\end{equation}
which corresponds to the correlation between the two uniformly distributed
generators of $X$ and $Y$ respectively. The results using alternatives such as
Kendall's tau would be similar. The minimum Spearman correlation is -1 and it
is achieved by the countermonotonicity structure (originally called
``antithetic" dependence in the language of \cite{hammersley1964monte}).

This measure $\varrho(\mathbf{X})$ is different from the multivariate
Kendall's tau, multivariate Spearman correlation, the average pairwise
Kendall's tau, or the average pairwise Spearman correlation recalled in
Definition 8 of \cite{lee2014multidimensional}. In \eqref{measure1}, we
average the bivariate dependence measure $\phi$ between two sums taken over
the two subsets $\Pi$ and $\bar{\Pi}$ of a partition of $\{1,2,...,n\},$ i.e.
two disjoint non-empty sets $\Pi$ and $\bar{\Pi}$ with $\Pi\cup\bar{\Pi
}=\{1,2,...,n\}$. Contrary to existing multivariate dependence measures, it is not driven by the dependence pairwise. In addition, it has a nice connection with convex order as outlined in Remark \ref{res} hereafter.

This $n$-dimensional dependence measure, $\varrho(\mathbf{X}),$ can be
unbiasedly estimated either by choosing some of the $2^{n-1}-1$ such
partitions without replacement or by assigning columns at random, e.g. put
\[
\mathbf{S}_{n}=\mathbf{\mathbf{X}_{1}+\mathbf{X}_{2}+...+\mathbf{X}%
_{n-1}+\mathbf{X}_{n}}%
\]
and average the values of
\begin{equation}
\phi\left(  \sum_{i=1}^{n}I_{i}\mathbf{X}_{i},\mathbf{S}_{n}-\sum_{i=1}^{n}
I_{i}\mathbf{X}_{i}\right)  \label{estimate1}%
\end{equation}
over many samples of random independent Bernoulli variables $I_{i}$ for
which $0<\sum_{i=1}^{n}I_{i}<n$. \

\begin{rem}
As a side remark, we give the continuous formulation of our newly proposed
multivariate risk measure $\varrho.$ Starting from the definition of the
Spearman correlation in \eqref{SRho}, and using the moments of a uniformly
distributed variable, $\hbox{var}(F_{X}(X))=\frac{1}{12}$ and $E(F_{X}%
(X))=\frac{1}{2},$
\begin{align*}
\phi(X,Y)  &  =12\int_{0}^{1}\int_{0}^{1}P(F_{X}(X)>x,F_{Y}(Y)>y)dxdy-3
\end{align*}
(see for example \cite{N}). Therefore, with $S_{\Pi}=\sum_{i\in\Pi}%
\mathbf{X}_{i}$ we define the multivariate measure of dependence related to
the joint distribution of a random vector $\mathbf{X}$ of dimension $n,$
\begin{align*}
\varrho(\mathbf{X})  &  :=\frac{1}{2^{n-1}-1}\sum_{\Pi\in\mathcal{P}}%
\phi\left(  S_{\Pi},S_{\bar{\Pi}}\right) \\
&  =12\int_{0}^{1}\int_{0}^{1}\left[  \frac{1}{2^{n-1}-1}\sum_{\Pi
\in\mathcal{P}}P\left(  F_{S_{\Pi}}(S_{\Pi})>x,F_{S_{\bar{\Pi}}}(S_{\bar{\Pi}%
})>y\right)  \right]  dxdy-3
\end{align*}
This can be estimated unbiasedly by choosing one or more partitions $\Pi$ and
$\bar{\Pi}$ at random in the set $\mathcal{P}$ of all possible $2^{n-1}-1$
partitions and corresponding uniformly distributed random numbers
$U,V\sim\mathcal{U}[0,1]$ and using 12 times the proportion of times that
$F_{S_{\Pi}}(S_{\Pi})>U$ and $F_{S_{\bar{\Pi}}}(S_{\bar{\Pi}})>V$ minus 3.
\end{rem}

\subsection{Necessary condition to minimize convex functions of a sum}

It has been noted in \cite{Puccetti-Rueschendorf-2012} that the situation in
which all the columns are countermonotonic with the sum of all others is a
necessary condition to have a dependence structure that minimizes the
expectation of a convex function of a sum. Proposition \ref{PP1} below is a
straightforward extension. The result holds for the minimization of any
expectation of a convex function and as a special case for the variance of the
sum $\hbox{var}(\sum X_{i})$. We provide a counterexample to show that the
condition is not sufficient.

\begin{proposition}
[Necessary condition to minimize expected convex functions of a sum]
~\newline\label{PP1} Let $f$ be a convex function. If $\text{E}\left(
f\left(  \sum_{i}\mathbf{X}_{i}\right)  \right)  $ is at a minimum then
$\phi\left(  \sum_{i\in\Pi}\mathbf{X}_{i},\sum_{i\in\bar{\Pi}}\mathbf{X}%
_{i}\right)  $ is minimized for every partition into two sets $\Pi$ and
$\bar{\Pi}.$ However, the converse does not hold in general.
\end{proposition}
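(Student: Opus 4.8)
The plan is to prove the necessary condition by a local-exchange (swap) argument, and then to exhibit the counterexample separately. For the first part, suppose $\mathbf{X}=(\mathbf{X}_1,\dots,\mathbf{X}_n)$ realizes the minimum of $\mathrm{E}(f(\sum_i \mathbf{X}_i))$ over all couplings with the given margins, and fix a partition $\{1,\dots,n\}=\Pi\cup\bar\Pi$. Write $S_\Pi=\sum_{i\in\Pi}\mathbf{X}_i$ and $S_{\bar\Pi}=\sum_{i\in\bar\Pi}\mathbf{X}_i$, so that the total sum is $S=S_\Pi+S_{\bar\Pi}$ and the objective is $\mathrm{E}(f(S))$. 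The key observation is that we are free to re-couple $S_\Pi$ and $S_{\bar\Pi}$ against each other while keeping each of their (marginal) distributions fixed; this is a strictly smaller relaxation of the optimization, so the optimum of the original problem must in particular be optimal within this sub-family. Hence it suffices to show that, for two random variables $S_\Pi, S_{\bar\Pi}$ with fixed marginals, $\mathrm{E}(f(S_\Pi+S_{\bar\Pi}))$ is minimized only when the pair is countermonotonic, equivalently only when $\phi(S_\Pi,S_{\bar\Pi})$ attains its minimal value $-1$. This is exactly the classical bivariate result behind the RA: among all couplings of two fixed marginals, the convex order of the sum is minimized (in the increasing-convex / convex-order sense, hence also for every convex $f$ with the relevant integrability) by the countermonotonic coupling — see \cite{Puccetti-Rueschendorf-2012}. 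Since countermonotonicity of $(S_\Pi,S_{\bar\Pi})$ is precisely the statement $\phi(S_\Pi,S_{\bar\Pi})=-1=\min\phi$, we conclude that at the optimum $\phi(\sum_{i\in\Pi}\mathbf{X}_i,\sum_{i\in\bar\Pi}\mathbf{X}_i)$ is minimized, and since $\Pi$ was arbitrary this holds for every partition.

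I would present the contrapositive explicitly to make the argument self-contained: if for some $\Pi$ the pair $(S_\Pi,S_{\bar\Pi})$ is not countermonotonic, then rearranging $S_{\bar\Pi}$ (say) to be countermonotonic with $S_\Pi$ — while leaving all other joint features irrelevant since $S=S_\Pi+S_{\bar\Pi}$ depends only on this pair — strictly decreases (or at least does not increase, and strictly decreases for strictly convex $f$ in the non-degenerate case) $\mathrm{E}(f(S))$, contradicting optimality. The one technical point to flag is the distinction between "minimized" meaning "attains the value $-1$" versus "cannot be decreased by a feasible move"; since $\phi$ ranges in $[-1,1]$ with $-1$ attained exactly at countermonotonicity, and the bivariate convex-order result says the minimizer of $\mathrm{E}(f(\cdot))$ is countermonotonic, these coincide, and I would state this equivalence cleanly once at the start.

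For the converse, the plan is to give an explicit small discrete example — a matrix $\mathbf{X}$ with $n=3$ columns and a handful of rows — in which every one of the $2^{3-1}-1=3$ pairwise sums $\phi(\mathbf{X}_i,\mathbf{X}_j+\mathbf{X}_k)$ attains its minimum (i.e.\ each column is countermonotonic with the sum of the other two), yet $\mathrm{E}(f(S))$, e.g.\ $\mathrm{var}(S)$, is strictly larger than at some other admissible rearrangement. The construction I have in mind uses a matrix where the row sums are not as equalized as possible: countermonotonicity of each column with the sum of the rest is a "first-order" stationarity condition that pairwise swaps within a single column cannot improve, but a coordinated rearrangement involving two columns at once (a block move) can still reduce the spread of the row sums. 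I would simply display the matrix, tabulate the three pairwise Spearman correlations to confirm each equals $-1$ (or is the attainable minimum for that discrete marginal), compute $\mathrm{var}(S)$, and then exhibit one competing permutation with strictly smaller variance.

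The main obstacle is the converse: finding the smallest, most transparent counterexample and verifying by hand that all three pairwise conditions are genuinely at their minima while the variance is not. The forward direction is essentially a citation of the bivariate optimal-coupling theorem plus the "relax to re-coupling the pair" observation, so the real work — and the only place where a careless choice would break the argument — is in pinning down a clean matrix for the "however, the converse does not hold" clause.
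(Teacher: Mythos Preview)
Your argument for the forward direction is correct and is essentially the same route the paper takes: the paper simply cites Theorem~3.8(d) of \cite{puccetti2014general}, and what you sketch is precisely the content of that result (re-couple the two block sums, invoke the bivariate convex-order minimality of the countermonotonic coupling). So on that half there is nothing to add.

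The counterexample plan, however, has a flaw in its internal logic. You propose $n=3$ and say that a ``coordinated rearrangement involving two columns at once (a block move)'' will reduce the variance even when each column is already countermonotonic with the sum of the other two. But with $n=3$ there are exactly $2^{3-1}-1=3$ partitions, and each one is a single column against the other two. So the hypothesis ``$\phi(S_\Pi,S_{\bar\Pi})$ is minimized for every partition'' is \emph{already} the statement that every block rearrangement is at its optimum; no block move can improve anything once that condition holds. (The paper makes this point explicitly: for $n=3$ the RA and the Block RA coincide.) Any strict improvement in a $3$-column counterexample must therefore come from permuting two columns by \emph{different} permutations simultaneously, which changes the marginal of $S_{\bar\Pi}$ and is not a block move in your sense. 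Your intuition for why the example works is thus wrong, even if such a $3$-column matrix may exist.

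The paper sidesteps this by taking $n=4$: it exhibits an explicit $4\times 4$ matrix $B_1$ for which all seven partitions (three of type $1$-vs-$3$ and four of type $2$-vs-$2$) are countermonotonic with row-sum variance $\approx 0.0435$, and then a column-wise rearrangement $B_2$ with constant (zero) row sums. This is cleaner because with $n=4$ the $\Sigma$-countermonotonicity condition is genuinely stronger than the single-column RA termination condition, yet still not sufficient for optimality. If you want to keep $n=3$, you need to rethink the mechanism of the counterexample; otherwise, moving to $n=4$ as in the paper gives a direct and checkable construction.
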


\proof The sufficient condition is proved in Theorem 3.8 (d) of
\cite{puccetti2014general}.
The other direction is unfortunately false. For example, consider the matrix
below:
\begin{equation}
B_{1}=\left(
\begin{array}
[c]{cccc}%
0.0662 & 0.2571 & 0 & -0.5842\\
0.3271 & 1.0061 & -1.3218 & -0.0833\\
0.6524 & -0.6509 & -0.0549 & 0.2495\\
1.0826 & -0.9444 & 0.9248 & -0.9263
\end{array}
\right)  \label{B2}%
\end{equation}
It is straightforward to check (with basic calculations) that for all $7$ possible partitions $\Pi,\bar{\Pi}$ we have
that $S_{\Pi},S_{\bar{\Pi}}$ are countermonotonic so that $\phi\left(
\sum_{i\in\Pi}\mathbf{X}_{i},\sum_{i\in\bar{\Pi}}\mathbf{X}_{i}\right)  =-1$
for all such partitions. The variance of the row sums is 0.04346. However, the
matrix
\begin{equation}
B_{2}=\left(
\begin{array}
[c]{cccc}%
0.0662 & 1.0061 & -1.3218 & 0.2495\\
0.3271 & 0.2571 & 0 & -0.5842\\
0.6524 & -0.6509 & 0.9248 & -0.9263\\
1.0826 & -0.9444 & -0.0549 & -0.0833
\end{array}
\right)  \label{B1}%
\end{equation}
obtained by a slightly different permutation of the columns provides constant
($=0$) row sums with a strictly smaller value of the variance of the row sums
(as the variance is then equal to zero). It is a counterexample for the expectation of any convex function and
not just for the variance. \hfill$\Box$

\begin{rem}\label{res}
Recall that the Spearman correlation $\phi$ between $S_{\Pi
}$ and $S_{\bar{\Pi}}$ is minimized with the value -1
achieved by the countermonotonicity between the pair of sums $S_{\Pi
}$ and $S_{\bar{\Pi}}$. Therefore, Proposition \ref{PP1} shows that a necessary
condition to attain a dependence between ${\mathbf{X}}_{i}$ that minimizes the
expectation of a convex function and thus the variance is that
\[
\varrho(\mathbf{X})=-1.
\]
\end{rem}

Note also that Proposition \ref{PP1} can be applied more generally to supermodular functions and convex functions of a sum are only special cases.

\section{Improved Rearrangement Algorithms\label{S2}}

In this section, we start by recalling the standard rearrangement algorithm
(RA) of \cite{Puccetti-Rueschendorf-2012} and \cite{Embrechts-et-al-2013}. We
then show how to improve it by designing the Block RA. We then illustrate the
improvement through some numerical examples. To facilitate the exposition in
this section, we develop algorithms aimed at minimizing the variance of the
sum. In Section \ref{SMCMC}, we will show how to adapt these algorithms to
ensure convergence to the global minimum of the expectation of a specific
convex function of the sum that is not necessarily the variance.

\subsection{Standard Rearrangement Algorithm}

The standard rearrangement algorithm is a method of constructing dependence
between variables $X_{j}$ $(j=1,2,\dots,n)$ such that the variance of the sum
$S_{n}$ becomes as small as possible. Consider a matrix $\mathbf{X}=[x_{{ij}%
}]_{i,j}$, corresponding to a multivariate vector $[\mathbf{X}_{1}%
,\mathbf{X}_{2},\dots,\mathbf{X}_{n}]$.

\paragraph{Standard Rearrangement Algorithm}

For $i$ from 1 to $n$, make the $i^{th}$ column countermonotonic with the sum
of the other columns. Repeat this process (by starting again from the first
column) until each column is countermonotonic with the sum of the other columns.

At each step of this algorithm, we make the $j^{th}$ column countermonotonic
with the sum $\sum_{i\neq j}X_{i}$, so that the variance of the sum of all
columns before rearranging is larger than the variance of the sum of all
columns after rearranging. At each step of the algorithm the variance
decreases, it is bounded from below (by 0) and thus converges (given that
there is a finite number of permutations of rows and columns). If it gets to
0, we have found a perfect mixability situation in which the dependence makes
the sum constant. Otherwise, there is no guarantee that we have found the
global minimum of the variance of the sum over all dependence structures.

We note however that it is possible to converge to a matrix $\mathbf{X}$ for
which $\varrho(\mathbf{X})>-1$ and therefore, that does not satisfy the
necessary condition of Proposition \ref{PP1}. For example, consider the
following matrix
\begin{equation}
\label{matC}C=\left[
\begin{array}
[c]{cccc}%
1.1423 & 0.3674 & 1.8266 & 2.1637\\
1.9135 & 0.9880 & 0.5237 & 2.0392\\
2.8994 & 0.0377 & 1.5924 & 1.0061\\
4.0077 & 0.8852 & 0.1974 & 0.4097
\end{array}
\right]  .
\end{equation}
The matrix $C$ is such that $S_{\Pi,}$ and $S_{\bar{\Pi}}$ are
countermonotonic whenever $\Pi=\{i\}$. In this case, the multivariate
dependence measure is $\varrho(\mathbf{X})=-0.9714$ because certain subsets
are not countermonotonic, in particular $S_{\Pi}$ and $S_{\bar{\Pi}}$ with
$\Pi=\{1,3\}$ and $\{2,3\}.$ For this matrix, the standard RA has already
declared convergence.

\subsection{Block Rearrangement Algorithm}

We now construct a version of the rearrangement algorithm designed to reduce
the measure $\varrho(\mathbf{X})$ in order to improve the convergence to the
minimum variance. Suppose for each partition $\Pi\in\mathcal{P}$ we know the
values of $\rho_{\Pi}=\phi\left(  \sum_{i\in\Pi}\mathbf{X}_{i},\sum_{i\in
\bar{\Pi}}\mathbf{X}_{i}\right).$ In order to reduce the variance of the
sum, we wish to reduce the covariances $\hbox{cov}\left(  \sum_{i\in\Pi}%
\mathbf{X}_{i},\sum_{i\in\bar{\Pi}}\mathbf{X}_{i}\right)  $ and in particular,
rearrange so as to reduce the largest of these values. We will therefore apply
a rearrangement of the elements of $\mathbf{X}_{i},i\in\bar{\Pi}$ so that the
sums $\sum_{i\in\bar{\Pi}}\mathbf{X}_{i}$ are countermonotonic to $\sum
_{i\in\Pi}\mathbf{X}_{i}.$

Suppose that the matrix $\mathbf{X}=[\mathbf{X}_{1},\mathbf{X}_{2}%
,...,\mathbf{X}_{n-1},\mathbf{X}_{n}]$ has covariance matrix $\Sigma.$ Note
that $S_{\Pi}=\sum_{i\in\Pi}\mathbf{X}_{i}$ and so
\[
\text{var}\left(  \sum_{i}\mathbf{X}_{i}\right)  =\text{var}(S_{\Pi
})+\text{var}(S_{\bar{\Pi}})+2\text{cov}(S_{\Pi},S_{\bar{\Pi}})
\]
This consists of the sum of three classes of elements of the covariance matrix:

(a) the sum of $\Sigma_{ij}$ for both $i,j\in\Pi$

(b) the sum of $\Sigma_{ij}$ for both $i,j\in\bar{\Pi}$

(c) the sum of $\Sigma_{ij}$ for $i\in\Pi,j\in\bar{\Pi}.$

An algorithm which proceeds at each step by keeping the values of
$\text{var}(S_{\Pi})$, $\text{var}(S_{\bar{\Pi}})$ constant while minimizing
the value of $\text{cov}(S_{\Pi},S_{\bar{\Pi}})$ over rearrangements of the
blocks, is bound to result in a non-increasing variance and will therefore
converge. In order to obtain a maximum benefit from this single rearrangement,
we wish to choose a subset ${\Pi}$ for which the Spearman correlation
$\phi\left(  S_{\Pi},S_{\bar{\Pi}}\right)  $ is the largest and then rearrange
the second block so that $S_{\bar{\Pi}}$ is countermonotonic to the values of
$S_{\Pi}$, thereby rendering $\phi\left(  S_{\Pi},S_{\bar{\Pi}}\right)  =-1.$
Since $\text{var}(S_{\Pi}),\text{var}(S_{\bar{\Pi}})$ are unchanged and
$\text{cov}(S_{\Pi},S_{\bar{\Pi}})$ is reduced, this results in a reduction of
$\text{var}(\sum_{i}\mathbf{X}_{i}).$ It turns out that choosing the largest
Spearman correlation $\phi\left(  S_{\Pi},S_{\bar{\Pi}}\right)  $ among a
relatively small number of possible partitions speeds up the algorithm and is
adequate. For a matrix $\mathbf{X}$ with $n$ columns, there are $p:=2^{n-1}-1$
possible subsets of $\Pi\subset\{1,2,3,...,n\}$ such that $\bar{\Pi}$ is
non-empty so there are $p$ possible partitions in $\mathcal{P}$. In our
algorithm, at each stage we choose to compare $\phi\left(  S_{\Pi},S_{\bar
{\Pi}}\right)  $ over $\min(p,512)$ different partitions $\{\Pi,\bar{\Pi}\}$,
chosen at random from this set of $p$ possible partitions.

\paragraph{Block Rearrangement Algorithm (Block RA1)}

\begin{enumerate}
\item Select a random sample of $n_{sim}$ possible partitions of the columns
\thinspace$\{1,2,...,n\}$ into non-empty subsets $\{\Pi,\bar{\Pi}\}.$ Note if
$n_{sim}=2^{n-1}-1$, all partitions are considered.

\item For each of the above partitions compute $\rho_{\Pi}=\phi\left(  S_{\Pi
},S_{\bar{\Pi}}\right)  .$ Identify the partition with the largest value of
$\rho_{\Pi}.$

\item Rearrange the second block so that $S_{\bar{\Pi}}$ is countermonotonic
to the values of $S_{\Pi}.$

\item Compute the value of $\varrho(\mathbf{X})=\frac{1}{2^{n-1}-1}%
\sum_{\text{ }\Pi\in\mathcal{P}}\phi\left(  S_{\Pi},S_{\bar{\Pi}}\right)  $

\item If\footnote{This condition can be replaced by a number close to -1 such
as -0.9999.} $\varrho(\mathbf{X})>-1,$ return to step 1. Otherwise, output the
current matrix $\mathbf{X}$.
\end{enumerate}

\begin{rem}
In the selection of a candidate partition (in step 2 above), Pearson
correlation can be used in place of Spearman correlation. On the one hand, there is a
significant computational advantage because Pearson correlation between all
possible partitions is a function of the covariance matrix, which can be
computed once only. On the other hand, the effect of the RA on the Spearman correlation is very
clear as it replaces it by -1 after the algorithm is applied, whereas the
effect of Pearson correlation on the variance of the sum cannot be easily
predicted before running the RA. Using Pearson correlation  is more appropriate for large matrices.
\end{rem}

The example of matrix $C$ given in \eqref{matC} shows that the block
rearrangement algorithm is more likely to identify a dependence structure that
minimizes the variance since the standard RA may converge to a matrix
${\mathbf{X}}$ such that $\varrho(\mathbf{X})\neq-1,$ whereas the block RA
presented above ensures that the resulting matrix is such that $\varrho
(\mathbf{X})$ is $-1.$ However, the contraposive of Proposition \ref{PP1} is
not true, thus there are situations for which $\varrho\left(  \mathbf{X}%
\right)  =-1$, and thus $\phi\left(  \sum_{i\in\Pi}X_{i},\sum_{i\in\bar{\Pi}%
}X_{i}\right)  $ is minimized for every partition in two sets and the variance
is not minimized. That is, we find a local minimum for the block RA presented
above. Consider for example the matrices $A_{1}$ and $A_{2}$:
\[
A_{1}=\left(
\begin{array}
[c]{cccc}%
0.0662 & -0.9444 & 0 & -0.5842\\
0.6524 & 1.0061 & -0.0549 & 0.2495\\
0.3271 & -0.6509 & -1.3218 & -0.0833\\
1.0826 & 0.2571 & 0.9248 & -0.9263
\end{array}
\right)
\]

\[
A_{2}=\left(
\begin{array}
[c]{cccc}%
0.0662 & -0.9444 & 0 & -0.5842\\
0.6524 & -0.6509 & -0.0549 & 0.2495\\
0.3271 & 1.0061 & -1.3218 & -0.0833\\
1.0826 & 0.2571 & 0.9248 & -0.9263
\end{array}
\right)  .
\]

Applying the block RAs described above to these initial matrices $A_{1}$ and
$A_{2}$ (with a stopping rule of $\varrho(\mathbf{X})=-1$), results in convergence to
two different matrices $B_{1}$ and $B_{2}$ given by (\ref{B2}) and (\ref{B1})
with different row sums having variances 0.04346, and 0 respectively, and
multivariate dependence measure $\varrho(B_{1})=\varrho(B_{2})=-1$. For the
various possible permutations of the columns of the matrix $A_{1}$, there is a
number of possible limit matrices or local minima, with variance of the row
sums equal to $0,0.0049,0.0151,0.0217$, and $0.0435$ and over one third of the
possible starting permutations (27 of 72) lead to limits that do not minimize
the variance of the row sums. For small matrices this appears to be the rule
rather than the exception. For example, for randomly generated $4\times4$
matrices with independent $\mathcal{N}(0,1)$ distributed elements, the vast
majority (more than 80\%) appear to possess multiple local minima, in many
cases five or more as in the example above. It should not be surprising that
there may be several local minima, since this is a discrete optimization
problem, less smooth when there is a small number of rows. Moreover the
\textit{order} in which the partitions are selected may effect which local
minimum convergence is to. If the global minimum is required, then we can
begin with a number of different starting configurations, and also rely on the
randomness of the Block RA2 and see whether convergence is to a common point.
Note that this algorithm can be applied instead to a subset of the rows of
$\mathbf{X}$, but when $\varrho({\mathbf{X}})=-1$, no further improvement is
possible even on a subset of the rows.


The block RA will, in a finite amount of time, end up with a $\Sigma
$-countermonotonic structure
(\cite{puccetti2014general,lee2014multidimensional}) in which all sums over
$\Pi$ and $\bar{\Pi}$ are countermonotonic. To avoid the computationally
expensive calculation of $\varrho(\mathbf{X})$ and $\rho_{\Pi}$ for each
$\Pi,$ we have the following variation on the Block RA that we will use throughout our examples.

\paragraph{Block Rearrangement Algorithm 2 (Block RA2)}

\begin{enumerate}
\item Select a random sample of $n_{sim}=\min(512,2^{n-1}-1)$ possible partitions of the columns
\thinspace$\{1,2,...,n\}$ into non-empty subsets $\{\Pi,\bar{\Pi}\}.$ Note if
$n_{sim}=2^{n-1}-1,$ all partitions are considered.

\item For each of the above partitions, rearrange the second block so that
$S_{\bar{\Pi}}$ is countermonotonic to the values of $S_{\Pi}.$

\item If there is no improvement in var$\left(  \sum_{i}\mathbf{X}_{i}\right)
,$ output the current matrix $\mathbf{X}$, otherwise return to step 1.
\end{enumerate}

\begin{rem} Note  that the choice of $n_{sim}$  governs a
trade-off between complexity of one step of the algorithm and the number of
steps required for eventual convergence. Regardless of the value of $n_{sim},$
the algorithms converge to the same set of possible local minima but the value
of $n_{sim}$ governs the speed of that convergence. The relationship between
the speed of convergence and $n_{sim}$ is complicated since it depends on the
values in the matrix $\mathbf{X}$ and the current set of correlations $\{\phi\left(
S_{\Pi},S_{\bar{\Pi}}\right)  ;\Pi\in\mathcal{P}\}.$  Of course the
computational speed also depends on the size of the matrix, which affects the
time required to calculate the set of correlations  $\{\phi\left(  S_{\Pi
},S_{\bar{\Pi}}\right)  ;\Pi\in\mathcal{P}\}.$ The ideal choice of $n_{sim}$
 from a computational point of view in a particular problem may have to be
determined experimentally but theoretically, as mentioned above, the same set
of candidate minima result from any choice of $n_{sim}\geq1.$
\end{rem}

\subsection{Comparison of performance of the RA and Block RA}

In this section, we compare the performance of the RA and BRA in achieving the
global minimum or in approximating it. When there are three variables $(n=3)$,
the RA and the BRA are equivalent as all blocks from the Block RA correspond
to 1 column in one block and 2 columns in the other block. Therefore, there is
no reduction in variance for $n=3$. In what follows, we concentrate ourselves
to cases when $n\ge4$.

When there is a small number of columns (for example $n\leq15$), we are able
to do a block RA taking all possible partitions into two blocks ($n_{sim}%
=2^{14}-1=16,383$), with the multivariate correlation $\varrho$ computed
exactly and, on termination, equal to -1.

For small matrices (less than 10 rows and 4 columns), we can determine the
global minimum by trying every permutation of the columns.\footnote{It is also
possible to use a linear programming solver to solve for the global minimum.
It could typically handle slightly larger matrices.} We then run the RA and
the BRA to test whether they reach the global minimum, and if they do not,
then we compute  how far they are from this global minimum. Specifically, we repeat 10,000 times the
following experiment:

\begin{itemize}
\item Initialize the matrix $\mathbf{X}$ by simulating $m$ independent
Uniform[0,1] for the first column and then placing random permutations of
these same values in the remaining $n-1$ columns.

\item If $m\le10$ rows and $n\le4$ columns, permute columns $2,3,...,n-1$ in
all $\left(  m!\right)  ^{n-2}$ ways, and arranging column $n$ so that it is
countermonotonic with the sum of the other columns. Among all these configurations, find the matrix
$\mathbf{X}^{*}$ whose row sums have the global minimum variance $V^{*}.$

\item Apply the standard RA to $\mathbf{X}$ to obtain a local minimum
$\mathbf{X}_{ra}$ in which all columns are countermonotonic to the sum of the
others. Compute the variance $V_{ra}$ of the row sums of $\mathbf{X}_{ra}$.

\item Apply the block RA, BRA2, to $\mathbf{X}_{ra}$ to obtain the matrix
$\mathbf{X}_{bra}$ and the variance $V_{bra}$ of the row sums of
$\mathbf{X}_{bra}$.
\end{itemize}

\paragraph{How much does the BRA improve upon the RA?\newline}

In order to compare the RA and the BRA, we compute the average value for
$V_{ra}$ and for ${V_{bra}}$. The results are reported in Table \ref{Tcomp}.
\begin{table}[!htbp]
\caption{Average variance for the RA and for the Block RA. Both averages are
estimated with 10,000 experiments as described above for different values of
$n$ and $m$. All digits reported in the table are significant. }%
\label{Tcomp}%
\begin{center}%
\begin{tabular}
[c]{c||c|c||c|c||c|c|}
& \multicolumn{2}{c||}{$n=4$} & \multicolumn{2}{|c||}{$n=7$} &
\multicolumn{2}{|c|}{$n=10$}\\
average of & $\quad{V_{ra}}\quad$ & $\quad{V_{bra}}\quad$ & $\quad{V_{ra}%
}\quad$ & $\quad{V_{bra}}\quad$ & $\quad{V_{ra}}\quad$ & $\quad{V_{bra}}\quad
$\\\hline
$m=10$ & 0.001 & 0.0006 & 0.0004 & 1.1$\times10^{-5}$ & 0.00018 &
1.8$\times10^{-7}$\\
$m=100$ & 1.2 $\times10^{-5}$ & 5.5 $\times10^{-6}$ & 3.4$\times10^{-6}$ &
8$\times10^{-8}$ & 1.6$\times10^{-6}$ & 1.3$\times10^{-9}$\\
$m=1,000$ & 1.2$\times10^{-7}$ & 5.5$\times10^{-8}$ & 3.2$\times10^{-8}$ &
7.6$\times10^{-10}$ & 1.6$\times10^{-8}$ & 1.2$\times10^{-11}$\\\hline
\end{tabular}
\end{center}
\end{table}

We make the following observation on Table \ref{Tcomp}. The larger
the number of variables $n$ or the number of discretization steps $m$, the larger
the improvement of the Block RA over the RA. We have performed other
experiments with other distributions and we obtain similar results.


\paragraph{Convergence of the RA and BRA algorithms to the global minimum
variance\newline}

Both the RA and the Block RA2 terminate because they are based on the variance
of the row sums that decreases strictly at each step and is bounded from below
by 0, and because there is a finite number of permutations, hence a finite
number of values of this quantity. However, as we have shown, it is possible
to end up at a local minimum of the variance instead of the global minimum.

In Figure \ref{F00}, we plot the percentage of cases in which $V_{ra},$ resp.
$V_{bra}$ is within a given tolerance\footnote{$10^{-6}$ in this case.} of
$V^{*}.$ It shows that this percentage decreases quickly to 0 as  $m$ increases. Table \ref{T1b} reports  the averages of the difference $V_{ra}-V^{*}$ and $V_{bra}-V^{*}.$ We find
that the Block RA outperforms the RA by several orders of magnitude. 

\begin{table}[!htbp]
\caption{Average distance from the minimum for the RA and the BRA for $n=4$
variables.}%
\label{T1b}%
\begin{center}%
\begin{tabular}
[c]{c|c|c|c|c}
& $m=4$ & $m=5$ & $m=6$ & $m=7$\\\hline
RA & 0.0020 & 0.0015 & 0.0026 & 0.0015\\\hline
BRA & 0.0001 & 0.0002 & 0.0003 & 0.0003\\
\end{tabular}
\end{center}
\end{table}

\begin{figure}[!h]
\begin{center}
\includegraphics*[width=10cm,height=7cm]{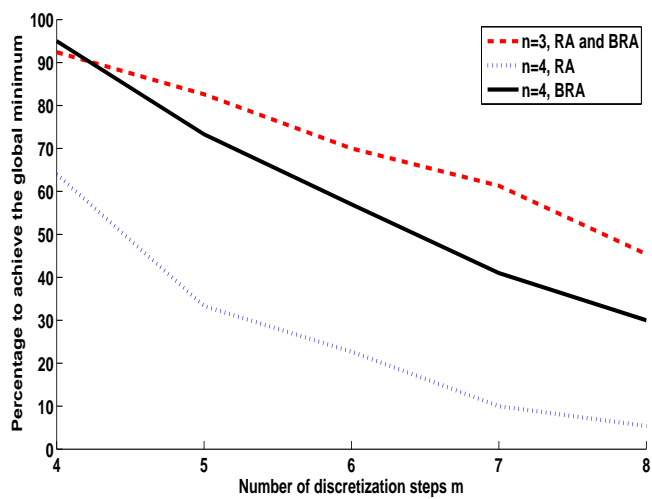}
\end{center}
\caption{Percentage of cases that the minimum from the algorithms RA or BRA
get to the global minimum (within $10^{-6}$)}%
\label{F00}%
\end{figure}

\begin{rem}
The comparison in Table \ref{T1b} is necessarily done with very small matrices as the
global minimum $V^{*}$ is computed by computing the variance in all possible
permutations of the matrix. For larger matrices, such a technique cannot be
used. In fact, there are very few cases for which we know the value of the
minimum. One option is to use the result of \cite{Haus} that gives the
minimum variance in the case of a matrix $m$ by $n$ that contains in each
column the integers $1,2,...,m$. But this is a very specific case. In this
case (at the minimum variance matrix), the mean of the sum is $\mu:=n\frac{1+2+...+m}{m}$ and the sum takes two
values $M=\lfloor\mu\rfloor$ with probability $q:=\mu-\lfloor\mu\rfloor$ and
$M+1$ with probability $(1-q)$ so that the minimum global variance can be
computed explicitly. We are then able to check the percentage of the time the
RA, respectively the BRA, achieves the global minimum by starting from a
randomized matrix (where each column has been randomly permuted). We obtain
similar conclusions as in Figure \ref{F00} and Table \ref{T1b}, namely the
percentage of the cases in which one achieves  the global minimum decreases with $n$ and
$m$ and the BRA is closer to the global minimum by several order of magnitude.
\end{rem}

In order to assess the convergence of the algorithm with larger matrices in a
more general setting, we propose to generate matrices that all have constant
row sums so that the variance of the row sums is zero. We then randomly
rearrange the values in each column, and then the RA and block RA2 can be
applied to the matrix to see to what extent the minimum variance is achieved.
For instance, we can generate a matrix of $\mathcal{N}(0,1)$ random variables
with constant row sums as follows. First, generate independent $\mathcal{N}%
(0,1)$ random variables, then subtract the row mean from each row so that the
sum is now $0.$ Lastly, multiply by the factor $\frac{m}{m-1}$ in order to
return the marginals to $\mathcal{N}(0,1).$

Applying the Block RA2 with this matrix of $\mathcal{N}(0,1)$ variables, we
obtain the results in Table \ref{T3b}. Neither the RA nor the BRA guaranteed achieving the
minimum possible variance in the simulations because of the complexity of the
problem. Note, however that the BRA is between 10 and 40 times closer to the
optimum than is the RA for $n\geq5.$ These two conclusions are consistent with
the previous examples.

\begin{table}[!htbp]
\caption{Average distance from the minimum for the RA and the BRA with 10,000
simulations with $m=10$.}%
\label{T3b}%
\begin{center}%
\begin{tabular}
[c]{c|c|c|c|c|c}
& $n=4$ & $n=5$ & $n=6$ & $n=7$ & $n=8$\\\hline
RA & 0.02 & 0.001 & 0.007 & 0.005 & 0.004\\\hline
BRA & 0.005 & 0.0009 & 0.0003 & 0.0001 & 0.00009\\
\end{tabular}
\end{center}
\end{table}

The RA and Block RA have been developed to minimize the expectation of a
convex function of the sum of dependent random variables with given marginal
distributions. As shown in \cite{Haus}, checking the complete mixability
condition is a NP-complete problem (even in the case of 3 variables only), and
therefore there exists no algorithm with polynomial complexity that converges
to the global minimum with certainty. Neither the RA nor the Block RA
guarantees convergence to the global minimum. Furthermore, our counterexample
\eqref{B2} and \eqref{B1} in Section \ref{S1} shows that the block RA may end
up in a strict local minimum with a positive variance while the global minimum
for the variance is equal to 0. Nevertheless the Block RA seems to approximate the global minimum to a reasonable degree of precision for large matrices.

\section{Application to finding the dependence to get a target distribution
for the sum \label{S3}}

As discussed in the introduction, the Rearrangement Algorithm has been widely used in finance and risk management. In this section, we discuss a new application as to infer the joint distribution among variables for which the distribution of the sum, the difference, or a weighted sum is known. We first illustrate the methodology with sums of normal or uniform variables. Next, we discuss a real-world application with the example of spread options.

\subsection{Indistinguishability}

We base our analysis on two goodness of fit test statistics. The first one is the
Kolmogorov-Smirnov (KS) test. It is fully non-parametric and applies to all
target distributions. The second one is less well-known but based on a more
appropriate measure of distance in our context, the $L^{2}$-Wasserstein
distance measure. The KS test is based on the following results. Suppose
$F_{m}$ is the empirical c.d.f. from a sample of size $m$ with true
distribution $F$. Define $D_{m}=\sup_{x}|F_{m}(x)-F(x)|$, then the asymptotic distribution of $\sqrt{m} D_m$  is well-known.\footnote{$\lim_{m\rightarrow+\infty}P(\sqrt{m}D_{m}\leq t)=H(t):=1-2\sum_{k=1}^{\infty
}(-1)^{k-1}e^{-2k^{2}t}.$}
We may use this asymptotic result to determine the median of the distribution for large $m$ or use simulations to approximate this value for finite $m$.  For example, when $m=10^6$, using simulations, we obtain that 
the median of $D_{m},$  $med_F(D_m)$, is approximately equal to 8.2$\times 10^{-4}$
 so
that any distribution within a region $F(x)\pm 8.2 \times 10^{-4}$ will fall
in a pointwise 50\% confidence interval around $F.$ This is a very strong
result as it implies for example that it falls in all standard (e.g., 95\%,
99\%) confidence intervals.

If $G(x)$ falls in such an interval based
on a sample of $m=10^{6}$ observations, it is usually indistinguishable from
the target cdf $F$. So for the purpose of this paper, we define:

\begin{definition}
\label{dKS} $G(x)$ is empirically KS-indistinguishable from $F(x)$ with a sample size of
$m=10^{6},$ if
\begin{equation}
\sup_{x}|G(x)-F(x)|\leq med_F(D_m)\label{KSdist}%
\end{equation}
\end{definition}

The KS test and therefore Definition \ref{dKS} applies to any cdf $F$. Of
course, other test statistics might also be used with empirical data to
determine the fit of the normal distribution. Observe also that the KS test is
based on the distance between the cdfs, using the distance $D_{m}$ defined
above. But the test statistic most consistent with the rearrangement algorithm
 is the $L^{2}$-Wasserstein metric, which measures the squared $L^{2}$ distance
between the quantile functions (see for example \cite{krauczi2009study}),
$\int_0^1|G^{-1}(u)-F^{-1}(u)|^{2}du$.

Let us define the following distance from an empirical quantile function
$F_{m}^{-1}(u)$ to the distribution $F$ (related to the $L^{2}$-Wasserstein squared distance)
\begin{equation}
T_{m}=\int_{0}^{1}|F_{m}^{-1}(u)-F^{-1}(u)|^{2}%
du\label{L2W-diist}
\end{equation}
Analogous to
Definition \ref{dKS}, we define 

\begin{definition}\label{sKS}
$G(x)$ is empirically $L^{2}-W$-indistinguishable from $F(x)$ if $\int_{0}%
^{1}|G^{-1}(u)-F^{-1}(u)|^{2}du\leq med_{F}(T_{m})$ \ for $m=10^{6}.$
\end{definition}
The asymptotic distribution of $T_m$ depends on $F$ (unlike to the KS distance that was discussed above).
The asymptotic distribution of $T_m$ is known for various $F$ (see for
example \cite{del1999tests}). However, it is a functional of a Brownian bridge and
so for finite $m$ we again determine the median from a simulation.

When the distribution $F$ is the standard normal distribution $N(0,1)$, then $med_{F}(T_{m})$ is approximately $3.5\times10^{-6}$ and when $F$ is $\mathcal{U}[-1,1]$, then it is approximately $4.7\times10^{-7}.$ 
Combining these two test statistics, we thus define the notion of empirical indistinguishability.

\begin{definition}
$G(x)$ is empirically indistinguishable from $F(x)$ if it is both empirically
$L^{2}-W$-indistinguishable from $F(x)$ and empirically $KS$%
-indistinguishable from $F(x)$.
\end{definition}

\subsection{Sum of two or more uniform distributions}

Perhaps surprisingly, there is a copula such that the sum of $n\geq2$ uniform
random variables is empirically indistinguishable from a normal distribution.
For convenience, we choose expected values equal to 0 and $X_{i}$ are
uniformly distributed over $[-a,a]$, i.e. $X_{i}\sim\mathcal{U}%
[-a,a],i=1,2,...,n$. We want to show there is a dependence structure such that
the sum of a given number of uniform random variables on $[-a,a]$ is close to
Normal $\mathcal{N}(0,1)$ distributed. In other words, we seek a copula for
random variables $X_{1},X_{2},...,X_{n}$ where $X_{i}\sim\mathcal{U}%
[-a,a],i=1,...,n$ and $X_{n+1}=-Z$ is $\mathcal{N}(0,1)$ such that
$\text{var}(S)=\text{var}(X_{1}+X_{2}+...+X_{n+1})$ is minimized. \ 

\paragraph{Block RA with $\mathcal{U}[-a,a]$ to achieve a $\mathcal{N}(0,1)$}

\begin{enumerate}
\item Start with an initial value of $a=1.5.$

\item Run the block RA. \ Periodically, at each step 1 of the block RA,
replace $a$ by a constant chosen such that $\text{var}(\sum_{i=1}^{n}%
X_{i})=1.$

\item Terminate the block RA when both  var$(S)$ and the value of $a$ fail to
change by a given tolerance. \ 
\end{enumerate}

This algorithm permits finding a copula such that the sum of
$n$ uniform $\mathcal{U}[-a,a]$ is indistinguishable from a normal random
variable. We can change the target distribution to a variety of distributions
and still get near equality up to a change in the location and scale of the target.

\begin{proposition}
For each $n\geq2,$ there exist a copula and a value of $a>0$  such that the
sum of $n$ dependent $\mathcal{U}[-a,a]$ with that copula is empirically indistinguishable from the
$\mathcal{N}(0,1)$.
\end{proposition}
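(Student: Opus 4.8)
The plan is to reduce the proposition to a statement about the existence of an exactly mixing copula, and then push the approximation through a limiting/discretization argument controlled by the two indistinguishability tolerances defined above. The heart of the matter is the following observation: if $X_1,\dots,X_n$ are $\mathcal{U}[-a,a]$ and their sum is \emph{exactly} $\mathcal{N}(0,1)$-distributed (with the scale $a$ chosen so that variances match, i.e. $n\cdot a^2/3 = 1$), we would be done trivially. Such an exact copula does not exist for finite-dimensional reasons, so instead I would work with the discrete $m$-bin version used throughout Section~\ref{S2}: replace each marginal by its $m$-quantile discretization, replace the target $\mathcal{N}(0,1)$ by its $m$-quantile discretization, and run the Block RA (with the rescaling step on $a$) on the $(n+1)$-column matrix whose last column holds $-Z$. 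The output of the Block RA is a matrix with some small row-sum variance $V_m$; a copula for $(X_1,\dots,X_n)$ is read off from the joint arrangement of the first $n$ columns.

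First I would set up the quantitative core. Fix $m = 10^6$ as in Definitions~\ref{dKS}--\ref{sKS}. Let $G_m$ denote the distribution of $S = X_1+\dots+X_n$ under the copula produced by the algorithm. The goal is to bound $\sup_x|G_m(x)-\Phi(x)|$ and $\int_0^1 |G_m^{-1}(u)-\Phi^{-1}(u)|^2\,du$ below the stated medians $8.2\times10^{-4}$ and $3.5\times10^{-6}$ respectively. The key link is that $S = (S_n) = -($last column$) + ($row-sum residual$)$, where the last column is exactly $-Z_m$ (the discretized normal) up to a permutation, and the residual $R := S_n - Z_m$ has $\mathrm{var}(R) = V_m$ and mean $0$ by construction. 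So $G_m$ is the law of $Z_m + R$, a small perturbation of the (discretized) standard normal. By Chebyshev, $P(|R|>\varepsilon)\le V_m/\varepsilon^2$, and a standard smoothing estimate gives $\sup_x|G_m(x) - \Phi_m(x)| \le \varepsilon\cdot\sup_x \phi_m(x) + V_m/\varepsilon^2$ (where $\phi_m$ is the density of the discretized normal, uniformly bounded independent of $m$); optimizing $\varepsilon \sim V_m^{1/3}$ gives a bound of order $V_m^{1/3}$. Similarly the $L^2$-Wasserstein distance between $G_m$ and $\Phi_m$ is at most $\sqrt{\mathrm{var}(R)} = \sqrt{V_m}$ by the trivial coupling $Z_m+R \leftrightarrow Z_m$, and the discretization errors between $\Phi_m$ and $\Phi$ in both metrics are $O(1/m)$, hence negligible at $m=10^6$. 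So it suffices to show the Block RA drives $V_m$ below a fixed absolute threshold (roughly $10^{-9}$ for the KS bound and $10^{-11}$ for the Wasserstein bound once constants are tracked).

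The hard part is precisely this: guaranteeing that the attained row-sum variance $V_m$ is small enough. I see two routes. The honest theoretical route is to invoke approximate complete mixability: the family $\{\mathcal{U}[-a,a]\}$ together with a normal component should be ``jointly mixable in the limit'' in the sense that $\inf_{\text{copula}} \mathrm{var}(S_{n+1}) = 0$ — this follows the spirit of \cite{WW11,puccetti2014general,wang2015joint}, where continuous, unimodal-type marginals with matching mean and a variance constraint that is satisfiable admit asymptotically mixing arrangements. One would argue that for each $\delta>0$ there is an $m$ and an arrangement with $V_m < \delta$ (e.g. by a direct construction: split $[-a,a]$ into $n$ equal sub-blocks and use a rotation-type copula, which makes partial sums cluster tightly, then correct with the normal column), and then simply exhibit that copula rather than relying on the algorithm converging to it. The weaker but paper-consistent route is to treat this proposition as an empirical existence claim backed by the algorithm's output in Section~\ref{S3}: the Block RA with the $a$-rescaling step, run at $m=10^6$, is observed to terminate with $V_m$ far below the thresholds above, and Definitions~\ref{dKS}--\ref{sKS} are exactly calibrated so that this numerical $V_m$ certifies empirical indistinguishability. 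I would state the proof as the reduction (variance bound $\Rightarrow$ indistinguishability via Chebyshev/smoothing) and then cite the construction/numerics for the existence of a copula with sufficiently small $V_m$, flagging that the variance-matching rescaling $n a^2/3 = 1$ is what pins down $a$, and that monotonicity of the algorithm in $V_m$ plus the explicit block-rotation construction supplies the needed bound for every $n\ge 2$.
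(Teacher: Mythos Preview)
Your reduction from small row-sum variance to the two indistinguishability criteria is reasonable, but the proposal has two concrete problems and, more importantly, misses the device the paper actually uses.

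First, the formula $na^{2}/3=1$ for pinning down $a$ is wrong: that is the variance of $\sum X_i$ under \emph{independence}, not under the copula you are constructing. In the paper's Table~\ref{T1} the values are $a\approx 2.08,\,1.38,\,1.31$ for $n=2,3,4$, none of which satisfy $na^{2}/3=1$; the algorithm adjusts $a$ so that $\text{var}(\sum X_i)=1$ under the \emph{current} dependence, which is strongly negative. (Relatedly, your $10^{-11}$ threshold for the Wasserstein criterion is miscomputed: since $T_m$ is already the squared $L^2$-Wasserstein distance, the coupling bound gives $T_m\le V_m$ directly, so $V_m\le 3.5\times10^{-6}$ suffices.)

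Second, and this is the real gap, you never establish that a copula with sufficiently small $V_m$ exists. Both of your routes are hand-waving: ``approximate joint mixability in the spirit of \cite{WW11,puccetti2014general,wang2015joint}'' is not a theorem you can cite for this specific $(n+1)$-tuple of marginals, and ``the algorithm is observed to terminate with small $V_m$'' is not a proof. The ``block-rotation construction'' you allude to is not specified. The paper avoids this difficulty entirely by a different and much cleaner argument: it invokes the exact result (attributed to Ruodu Wang, via joint mixability in \cite{wang2015joint}) that \emph{every} unimodal-symmetric distribution supported on $[-2a,2a]$ arises exactly as the law of $U_1+U_2$ for some copula on two $\mathcal{U}[-a,a]$ variables. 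One then chooses $a$ large enough that the $\mathcal{N}(0,1)$ distribution conditioned to $[-2a,2a]$ is within the KS and $L^2$--$W$ tolerances of $\mathcal{N}(0,1)$ (a pure tail estimate, no algorithm needed), and realizes that conditioned normal \emph{exactly} as $U_1+U_2$. This handles $n=2$; general $n$ follows by appending countermonotonic pairs $(U_i,-U_i)$, which contribute zero to the sum. So the paper's proof never needs to control a residual variance at all.
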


This theoretical result is not surprising. Ruodu Wang pointed out to us that
all unimodal-symmetric distributions supported in $[-2a,2a]$ can be
represented as the distribution of the sum of 2 variables uniformly
distributed on $[-a,a]$. The existence of such dependence structure between
two uniform variables can be proved using arguments of joint mixability
(\cite{wang2015joint}). The proof of this proposition requires only that we
choose $a$ large enough that the KS and the $L^2-$Wasserstein differences
between the standard normal distribution and the normal distribution constrained to lie
in the interval $[-2a,2a]$ is small, say less than $10^{-6}$, and then
represent this conditional normal distribution with the sum of two random
variables uniformly distributed on $[-a,a]$. Our approach makes it possible to
construct the explicit dependence between the two uniform variables
numerically and identify a suitable value of $a.$ 

We will verify it by a numerical evaluation of the integrals in KS distance \eqref{KSdist} and $L^{2}$-W distance \eqref{L2W-diist} with 50,000 steps.
Table \ref{T1} confirms the result for $n=2,3,4$. The critical value are respectively given by $med(D_m)=8.2 \times 10^{-4}$ and $med(T_m)=3.5 \times 10^{-6}$ 
\begin{table}[!h]
\caption{Sums of Uniform $\mathcal{U}[-a,a]$ and target cdf is a normal
$\mathcal{N}(0,1)$. We report the values of the KS distance in the second
column and the $L^{2}-W$ distance in the third column. The $L^{2}-W$ distance
is the variance of $X_{1}+X_{2}+...+X_{n}-Z$ where $Z$ has cdf $\mathcal{N}%
(0,1)$ and $X_{i}\sim\mathcal{U}[-a,a]$.}%
\label{T1}%
\begin{center}%
\begin{tabular}
[c]{c||c|c|c}%
n & KS distance & $L^{2}-W$ distance & a\\\hline
2 & $5.5 \times10^{-5} $ & $9.3\times10^{-7}$ & 2.08\\
3 & $3.2 \times10^{-5} $ & $4.9\times10^{-7}$ & 1.38\\
4 & $2.2 \times10^{-5} $ & $3.5\times10^{-10}$ & 1.31
\end{tabular}
\end{center}
\end{table}

We used a numerical evaluation of the distances (\ref{KSdist}) and
(\ref{L2W-diist}) using a grid of 50,000 points. For any $n>4,$ if $n$ is odd, we can build the copula for the first
three columns and then add countermonotonic  pairs of uniform random variables
$X_{i}=U_{i}$, $X_{i+1}=-U_{i}$ etc., where $U_{i}$ are $\mathcal{U}[-a,a]$
and independent of the first three columns. Similarly, we treat the case $n>4$
when $n$ is even. Indeed we obtain Kolmogorov Smirnov distances well within
the above-mentioned bound of $0.00082.$ Moreover, the observed value of
$T_{m}$ is again well within the 50\% confidence interval based on the
statistic $T_{m}.$\hfill$\Box$

The joint density of this copula is obtained in Panel A of Figure \ref{F1}
using a nonparametric density estimator for 10,000 data values. In general,
 the standard RA  leads to a bivariate density that is significantly less smooth than the one
obtained by the Block RA.

\subsection{Sum of two or more normal distributions}

If we reverse the roles of these two distributions, we can show that the sum
of two dependent $\mathcal{N}(0,\sigma^{2})$ random variables is
KS-empirically indistinguishable from a $\mathcal{U}[-1,1]$ random variable
with joint density of the copula displayed in Panel B of Figure \ref{F1}
below. In this case the target distribution is the $\mathcal{U}[-1,1]$ and the
asymptotic results for the $L^{2}$-Wasserstein test are complex so we
obtained the critical value $med_{F}(T_{m})\approx 4.7\times 10^{-7}$ from simulations with
$m=10^{6}.$

\begin{figure}[!htbp]
\begin{center}%
\begin{tabular}
[c]{cc}%
\includegraphics[
height=2.5in,
width=3.5in
]{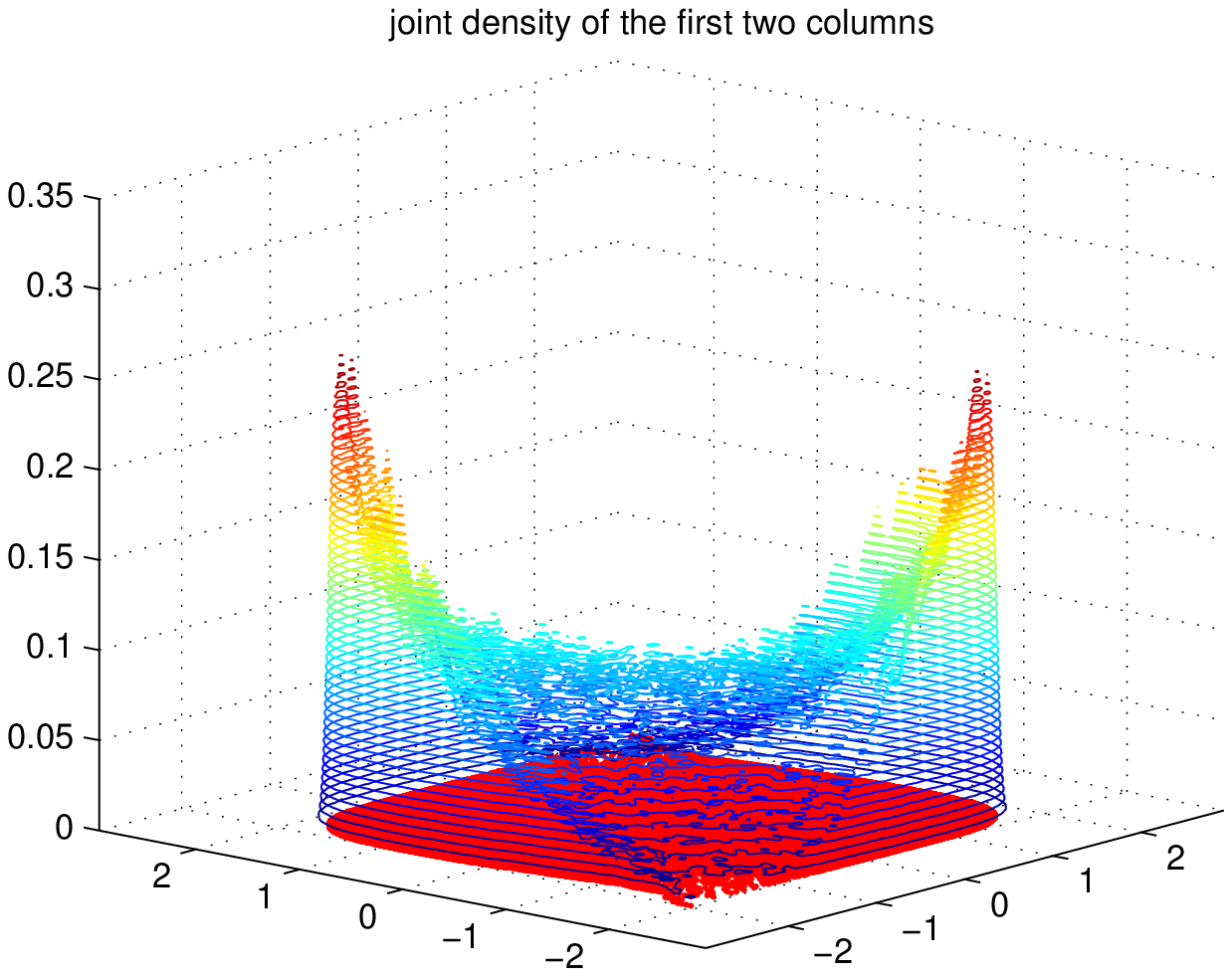} & \includegraphics[
height=2.5in,
width=3.5in
]{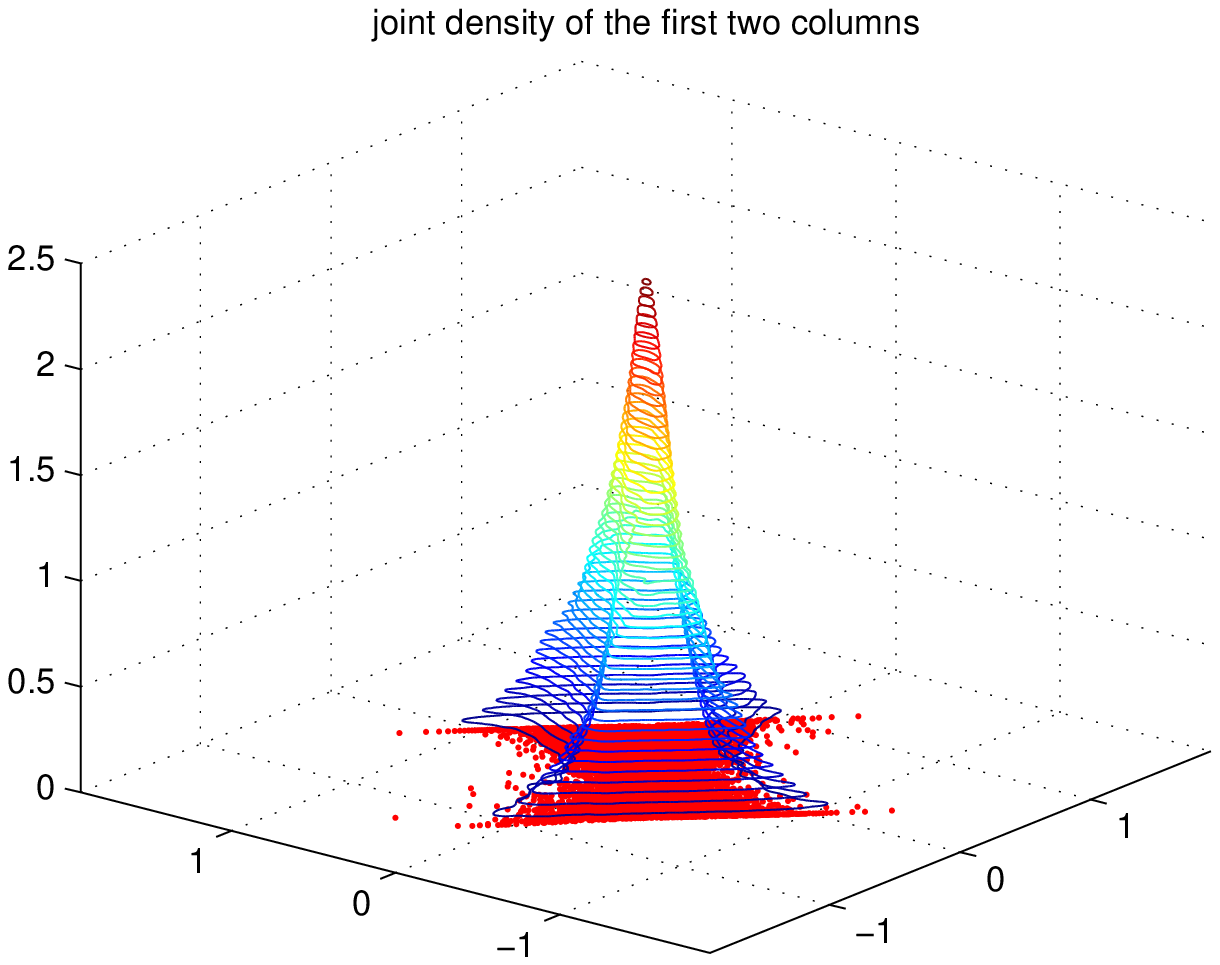}\\
Panel A & Panel B\\
&
\end{tabular}
\end{center}
\caption{Panel A: Joint density of two $\mathcal{U}[-2.056,2.056]$ random
variables whose sum is indistinguishable from $\mathcal{N}(0,1).$ Panel B: Joint density of the two
marginally normal random variables $\mathcal{N}(0,0.3363^{2})$ whose sum is
indistinguishable from $\mathcal{U}[-1,1].$}%
\label{F1}%
\end{figure}


\begin{proposition}
For each $n\geq2,$ there exist a copula and a value of $\sigma^{2}>0$ such that the sum of $n$ dependent $\mathcal{N}(0,\sigma^2)$ with that copula is
empirically indistinguishable from $\mathcal{U}[-1,1]$.
\end{proposition}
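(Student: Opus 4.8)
The plan is to run the proof of the preceding proposition in reverse, exchanging the roles of the uniform and the normal. First I would reduce to the base cases $n=2$ and $n=3$. Once a copula is available for which $X_1+\dots+X_k$ with $k\in\{2,3\}$ and each $X_i\sim\mathcal{N}(0,\sigma^2)$ is empirically indistinguishable from $\mathcal{U}[-1,1]$, one extends it to every larger $n$ of the same parity as $k$ by appending independent pairs of columns of the form $(U_j,-U_j)$ with $U_j\sim\mathcal{N}(0,\sigma^2)$: each such pair has the prescribed marginals, contributes $0$ to the row sum, and hence changes neither the law of the total sum nor its KS and $L^{2}$-Wasserstein distances to $\mathcal{U}[-1,1]$. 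So it suffices to build the two base copulas.

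For $n=2$ the decisive observation is that a bounded distribution \emph{can} be the exact law of a sum of two dependent normals; the familiar fact that such a sum has unbounded support holds only under independence, as the countermonotone coupling $X_2=-X_1$ (whose sum is the constant $0$) already illustrates. Writing $X_i=\sigma Z_i$ with $Z_i\sim\mathcal{N}(0,1)$, I would invoke the Gaussian instance of the joint mixability results of \cite{wang2015joint} --- the same circle of ideas underlying Proposition \ref{PP1} and Theorem 3.8 of \cite{puccetti2014general} --- to the effect that a symmetric law is realizable as the law of $Z_1+Z_2$ as soon as it is dominated in convex order by the comonotone sum $\mathcal{N}(0,4)$. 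Since the Gaussian stop-loss transform eventually exceeds that of any fixed bounded symmetric law, $\mathcal{U}[-1/\sigma,1/\sigma]$ is so dominated for every $\sigma^2$ above a threshold, which exhibits $\mathcal{U}[-1,1]$ \emph{exactly} as the law of $X_1+X_2$ for a suitable $\sigma$; indistinguishability is then immediate. If one prefers not to lean on the exact characterization, it is enough to first approximate $\mathcal{U}[-1,1]$ by a smooth symmetric $G$ lying within $10^{-7}$ of it in both senses, realize $G$ as such a sum, and check that the errors remain below the thresholds $med_F(D_m)=8.2\times10^{-4}$ and $med_F(T_m)\approx4.7\times10^{-7}$. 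The case $n=3$ is treated identically, now with the threefold convolution constraint, or else certified directly: run the Block RA, as was done to produce Panel B of Figure \ref{F1}, and evaluate the integrals in \eqref{KSdist} and \eqref{L2W-diist} numerically on a fine grid to confirm that both fall under the respective medians.

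I expect the main obstacle to be the \emph{sufficiency} direction of the representation statement: convex-order domination by the comonotone sum is plainly necessary, but producing an actual coupling of two (or three) copies of $\mathcal{N}(0,\sigma^2)$ whose sum has the target law is the delicate part, and it is there that the joint-mixability machinery of \cite{wang2015joint} carries the argument --- especially for $n=3$, where there is no shortcut as simple as the $(U,-U)$ pair. A second, quantitative, point is that the $L^{2}$-Wasserstein tolerance $med_F(T_m)\approx4.7\times10^{-7}$ is tiny, so if one proceeds via approximation rather than via an exact representation, the smoothing and truncation errors must be controlled very tightly; this is precisely why, in practice, the base cases are pinned down by direct numerical evaluation of \eqref{KSdist} and \eqref{L2W-diist}.
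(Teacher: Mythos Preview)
Your overall architecture---handle $n\in\{2,3\}$ directly and extend to all larger $n$ by appending countermonotonic normal pairs $(Z_j,-Z_j)$---is exactly what the paper does. The paper's argument for the base cases is purely computational: it runs the Block RA on the $(n+1)$-column matrix (the last column carrying $-Z$ with $Z\sim\mathcal{U}[-1,1]$), reads off the resulting $\sigma$, and checks on a $50{,}000$-point grid that both \eqref{KSdist} and \eqref{L2W-diist} fall below the simulated medians (Table~\ref{T2}). Your stated fallback is therefore the paper's actual proof.

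Where you go beyond the paper is the attempted \emph{exact} representation for $n=2$ via joint mixability, and that step is not solid as written. The assertion that any symmetric law dominated in convex order by the comonotone sum $\mathcal{N}(0,4)$ is realizable as the law of $Z_1+Z_2$ with standard normal margins is not a theorem in \cite{wang2015joint}; convex-order domination by the comonotone sum is necessary but is not known to be sufficient for general targets when the margins are Gaussian. The exact statement the paper invokes in the \emph{preceding} proposition is specific to uniform margins (every unimodal-symmetric law on $[-2a,2a]$ is the law of a sum of two $\mathcal{U}[-a,a]$ variables), and it does not transfer by swapping the roles of uniform and normal. If you want a theoretical route, the correct reformulation is three-way joint mixability of $(\mathcal{N}(0,\sigma^2),\mathcal{N}(0,\sigma^2),\mathcal{U}[-1,1])$, and you would need to point to a result that covers this heterogeneous tuple explicitly rather than a convex-order heuristic. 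Since the proposition only demands \emph{empirical} indistinguishability, the clean fix is to drop the exact-representation detour and rely on the Block RA certification you already describe, which coincides with the paper's proof.
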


Table \ref{T2} hereafter provides the result for $n=2,$ $3$ and $4$. However,
for any $n>3,$ we can build the copula for the first two or three columns and
then add countermonotonic  pairs of random variables $X_{4}=Z_{4},X_{5}=-Z_{4}$ etc.
where $Z_{i}$ are independent $\mathcal{N}(0,\sigma^{2})$ independent of the
first two or three columns. Once again we use a numerical evaluation of the
distances (\ref{KSdist}) and (\ref{L2W-diist}) using a grid of 50,000 points. This verifies that such a copula exists
for any $n\geqslant2$.

\begin{table}[!h]
\caption{Sums of Normal $\mathcal{N}(0,\sigma^{2})$ and target cdf is a
Uniform over $[-1,1]$. We report the values of the KS distance in the second
column and the $L^{2}-W$ distance in the third column. This is the variance of
$X_{1}+X_{2}+...+X_{n}-Z$ where $Z$ has cdf $\mathcal{U}[-1,1]$ and $X_{i}%
\sim\mathcal{N}(0,\sigma^{2})$.}%
\label{T2}%
\begin{center}%
\begin{tabular}
[c]{c||c|c|c}%
n & KS distance & $L^{2}-W$ distance & $\sigma$\\\hline
2 & 4.8$\times10^{-5}$ & 2.3$\times10^{-10}$ & 0.3363\\
3 & 2.7$\times10^{-5}$ & 4$\times10^{-11}$ & 0.4\\
4 & 1.9$\times10^{-5}$ & 7$\times10^{-12}$ & 0.45
\end{tabular}
\end{center}
\end{table}

\subsection{Application in Finance}

The above examples using normal and uniform variables may suggest that the methodology has limited practical implications. This is not correct as the methodology can be  very  useful in finance to infer the dependence among assets in the risk-neutral world (i.e., using option prices as sole available information). We briefly outline the methodology and show how it can be used to choose a pricing model for spread options that is consistent with option prices on each asset and on the spread (difference). It is particularly interesting in fitting the bivariate distribution between the gas price and the electricity price given that spark spread options are actively traded. Spark spread options are options on the spread between natural gas and electric power as $S_t=P_t-hG_t$ where $P_t$ and $G_t$ denote futures prices of power and gas, and  $h$ is the heat rate or efficiency ratio of a typical gas fired power plant (see \cite{alexander2004bivariate},  \cite{carmona2003pricing},  \cite{rosenberg2000nonparametric}). Let $T$ be the maturity of all options under consideration. From option prices on an asset with a large number of strikes it is possible to infer the marginal distribution of this asset (\cite{ait2000nonparametric}, \cite{breeden1978prices},
\cite{bondarenko2003estimation}). Assuming that prices of spread options are available at the same time as  options written on gas and electricity, it is possible to infer the marginal distributions of gas and electricity returns and of the spread. Our methodology can then be used to infer the dependence between gas and electricity returns as follows
\begin{enumerate}
	\item Use options on gas prices, on electricity prices and on the spark spread to  derive the distribution function  $F_P$ of $P_T$, $F_G$ of $hG_T$, $F_S$ of $S_T$ respectively (e.g., following the methodology of \cite{ait2000nonparametric}, \cite{breeden1978prices},
\cite{bondarenko2003estimation}).
\item For a given maturity, apply the block RA on a matrix with 3 columns and $m$ rows (where $m$ is the number of discretization  steps). Each column contains a discretized distribution:
	\begin{itemize}
		\item In the first column
		$$F_{P}^{-1}\left(\frac{i}{m+1}\right)\quad \quad i=1,2,...,m$$ 
		\item In the second column
		$$-F_{G}^{-1}\left(\frac{i}{m+1}\right)\quad \quad i=1,2,...,m$$ 
		\item In the third column
		$$-F_S^{-1}\left(\frac{i}{m+1}\right), \quad \quad i=1,2,...,m$$
		\item Apply the Block RA on the full matrix
	\end{itemize}
	
\underline{Output:} Extract the first two columns to describe a discrete copula that is consistent with the information on the marginal distributions of $P_T$, $G_T$, and of the spread $S_T$. 
\end{enumerate}

By repeating the above experiments sufficiently many times, one can describe models that are consistent with the information of the margins $F_P$, $F_G$ and $F_S$.

\section{Alternative approach: MCMC Block RA \label{SMCMC}}

As mentioned earlier, the problem of converging to the minimum of a convex function of the sum is NP-complete. It is thus not possible to find a deterministic algorithm that converges to the global minimum in polynomial time. We end the paper with an alternative direction that relies on a  stochastic algorithm to achieve the convergence to the global minimum asymptotically in polynomial time (Theorem \ref{t3}). 

The RA and the Block RA converge to a possible solution for the global minimum
of the expectation of a convex function of the sum (Proposition \ref{PP1}).
However, when there are more than 3 variables involved, the dependence
structure that achieves the global minimum variance does not necessarily
minimize other convex functions of the row sums. In this section, we develop a
stochastic algorithm that is able to identify the global minimum in finite time.

Consider the matrix $\mathbf{X}=[\mathbf{X}_{1},\mathbf{X}_{2},...,\mathbf{X}%
_{n-1},\mathbf{X}_{n}].$ For a set of columns $\Pi,$ we denote by
$S_{\bullet\Pi}=S_{\bullet\Pi}(\mathbf{X})$ the vector of sums
\[
S_{\bullet\Pi}:=\left(  \sum_{k\in\Pi}X_{ik}\right)  _{i=1,2,...,m}
\]
and by $\mathbf{X}_{\bullet\Pi}$ the submatrix $X_{ik}$, $i=1,...,m$, $j\in\Pi.$
Assume, without loss of generality, that the column sums of $\mathbf{X}$ are
all zero. For simplicity, consider the partition $\Pi=\{1,2,...,k\}$ and $\bar{\Pi}%
=\{k+1,...,n\}.$ We consider operations, which rearrange the rows in $\bar
{\Pi}$ while keeping those in $\Pi$ unchanged. The mean of the vector
$S_{\bullet\Pi}+S_{\bullet\bar{\Pi}}$ is unchanged. For a positive convex
function $f(s)$ of these row sums,
\[
f(S_{\bullet\Pi}+S_{\bullet\bar{\Pi}})\geq f(S_{\bullet\Pi}+S_{\bullet\bar
{\Pi}}^{a})
\]
where $S_{\bullet\bar{\Pi}}^{a}$ consists of the same components as
$S_{\bullet\bar{\Pi}}$ but arranged to be countermonotonic to $S_{\bullet\Pi
}.$ An operation, which rearranges the rows in $\bar{\Pi}$
countermonotonically, while keeping those in $\Pi$ unchanged results in a
reduction in a convex loss function. This choice is the basis of the block RA.

We now design a stochastic algorithm to determine local and global minima of
$f(S_{\bullet\bullet}(\mathbf{X})),$ where $S_{\bullet\bullet}(\mathbf{X})$ denotes the $m$ sums over all columns. In particular, we define
\[
\ell(\mathbf{X})=\frac{1}{f(S_{\bullet\bullet}(\mathbf{X}))}%
\]
and construct a Markov Chain designed to find the maxima of $\ell
(\mathbf{X}).$ Any other distribution $\ell(\mathbf{X})$ whose probabilities
are decreasing functions of $f(S_{\bullet\bullet}(\mathbf{X)})$ such as
$\exp(-Tf(S_{\bullet\bullet}(\mathbf{X)}))$ for some $T>0$ would also suffice. We choose a
random partition $\Pi$ uniform over the $2^{n-1}-1$ possible partitions. We
then propose a random rearrangement of the rows of $X_{ik},k\in\bar{\Pi}$
designed so that after the rearrangement, $S_{\bullet\Pi},S_{\bullet\bar{\Pi}%
}$ will \emph{tend to be} countermonotonic. We then \textquotedblleft accept"
the move to this new matrix $\mathbf{X}^{\prime}$, say, with probability
\[
\min\left(  1,\frac{\ell(\mathbf{X}^{\prime})}{\ell(\mathbf{X})}\right)  .
\]
Note that larger values of $\frac{\ell(\mathbf{X}^{\prime})}{\ell(\mathbf{X}%
)}$ tend to lead to acceptance of the move, and smaller values tend to result
in remaining at $\mathbf{X.}$ We arrange that for a given partition $\Pi$ the proposal depends only on the
matrix $\mathbf{X}_{\bullet\Pi}$. If $\ell(\mathbf{X})<\infty$ for all
 $\mathbf{X}$, this algorithm results in a finite state ergodic Markov Chain,
which converges to a stationary distribution with positive probability on all
possible states of the chain so that states with a very small value of
$f(S_{\bullet\bullet}(\mathbf{X)})$ appear with higher frequency.

We wish to select a random permutation $s^{\ast}$ of the rows of $\mathbf{X}%
_{\bullet\bar{\Pi}}$, which depends only on $S_{\bullet\Pi}$ in such a way
that, after rearrangment, $S_{\bullet\Pi},S_{\bullet\bar{\Pi}}$ tend to be
countermonotonic. To do so, we choose $S_{\bullet\bar{\Pi}}$ to be ranked
identically to independent observations from a location family of
distributions $Y_{i}-S_{i\Pi}$ where $Y_{i}\sim g(y)$. We might choose $g(y)$
to be normally distributed with mean $0$ and variance $\sigma^{2}$ or any
other location family of distributions. We used $g(y)$ following the Gumbel
extreme value distribution\footnote{A similar family of distributions
$g(z)=\frac{1}{\Gamma(r)}e^{-rz}\exp(-e^{-z})$ is obtained as the logarithm of
a Gamma distributed random variable and provides a random permutation from the
well-known Gamma ranking family (\cite{Stern}).}
\begin{equation}
g(z)=re^{-rz}\exp(-e^{-rz}). \label{gumbel}%
\end{equation}
When the scale parameter $1/r$ of this location family approaches $0,$ this
ranking approaches a countermonotonic one and the algorithm approaches the
Block RA. An illustration of the $m$ densities from which we simulate
independently is represented in Figure \ref{locfig}.

\begin{figure}[!htbp]
\begin{center}
\includegraphics[
height=2.1361in,
width=5.4284in
]{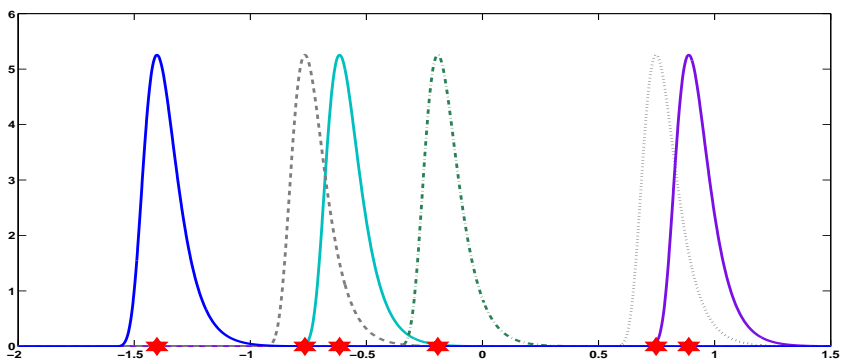}
\end{center}
\caption{Example with $m=6$ and the 6 location families from which we generate
independent random numbers corresponding to the 6 observations of
$S_{\bullet\Pi},$ these values are marked with ``*.'' }%
\label{locfig}%
\end{figure}

\textbf{Algorithm:} Repeat $n_{sim}$ times, with initial matrix
$\mathbf{X}$

\begin{enumerate}
\item Propose $\Pi.$ Determine $S_{\bullet\Pi},$ and $S_{\bullet\bar{\Pi}}$.
For independent random variables $Y_{i}$, $i=1,2,...m$ drawn from
(\ref{gumbel}) generate a random permutation $s^{\ast}(\bar{\Pi})$ by ranking
the observations $\mathbf{Y}-S_{\bullet\Pi}$. We then reorder the rows of
$\mathbf{X}_{\bullet\bar{\Pi}}$ using this permutation to obtain a proposal
matrix with $\mathbf{X}_{\bullet\bar{\Pi}}^{\prime}=\mathbf{X}_{\bullet
s^{\ast}(\bar{\Pi})}$ and leaving the columns in $\mathbf{X}_{\bullet\Pi}$
unchanged. The distribution of the permutation $s^{\ast}$ depends only on
$\mathbf{X}_{\bullet\Pi}$.

\item Accept the proposed rearrangement of the rows of $\mathbf{X}%
_{\bullet\bar{\Pi}}$ with probability proportional to $\min(1,\frac
{\ell(\mathbf{X})}{\ell(\mathbf{X}^{\prime})}),$ otherwise do not rearrange.

\item Record the states of the system $\mathbf{X}$ having small values for
$f(\mathbf{X})$ and their frequencies.
\end{enumerate}

We wish to identify the stationary distribution $\mu(\mathbf{X})$ of this
Markov Chain. Suppose $\mathbf{Y}$ is a matrix identical to $\mathbf{X}$
on the columns $\Pi$ and with columns $\bar{\Pi}$ a permutation of those
of $\mathbf{X}$, i.e. $\mathbf{Y}_{\bullet\bar{\Pi}}=\mathbf{X}_{\bullet s^{\ast}%
(\bar{\Pi})}.$ Then the transition probability matrix is defined by:
\[
P_{\mathbf{X,Y}}=\frac{1}{2^{n-1}-1}q(\mathbf{Y}|\mathbf{X}_{\bullet\Pi}%
)\min\left(1,\frac{\ell(\mathbf{X})}{\ell(\mathbf{Y})}\right).
\]
Here, $q(\mathbf{Y}|\mathbf{X}_{\bullet\Pi})$ is the probability of proposing the
permutation $s^{\ast}(\bar{\Pi})$ based on the row sums $S_{\bullet\Pi}$.
 The equilibrium distribution $\mu(\mathbf{X})$ must satisfy
\begin{align}
\sum_{\mathbf{X}}\mu(\mathbf{X})P_{\mathbf{X,Y}}  &  =\mu(\mathbf{Y}),\text{
and } \sum_{\mathbf{X}}\mu(\mathbf{X})=1.\label{7.5}
\end{align}
Although it may be difficult in general to solve this system of equations,
provided $0<\ell(\mathbf{X})<\infty,$ for all $\mathbf{X,}$ this is a finite
state irreducible positive recurrent (ergodic) Markov Chain. Therefore the
stationary distribution is such that every state has positive probability (see
Theorem, page 393, \cite{Feller}). Thus it guarantees that every state is visited
in finite time, and that the expected time before the chain visits the global
minimum is finite. If there is a matrix $\mathbf{X}$ such that $\ell(\mathbf{X})=0$, then the chain is absorbed and the algorithm terminates at this optimum.

This algorithm offers a compromise between rapid initial convergence and a
guarantee that the global minimum variance will eventually be achieved.
Depending on the choice of scale parameter, it offers a rapid convergence to a
region in which the objective function is small, followed by fluctuations
around the local minima of the function. Since every point $\mathbf{X}$ in the
sample space of all possible column permutations is visited with frequency
proportional to $\mu(\mathbf{X})$ we are guaranteed that the global minimum
will be reached in a finite amount of time. Indeed the stationary
probabilities $\mu(\mathbf{X})$ represent the reciprocals of the mean
recurrence time to this state.

\begin{theo}
\label{t3} The above algorithm generates a Markov Chain on the state space of
matrices $(\mathbf{X}_{n})_{n\in\mathbb{N}}$, which converges to its stationary distribution
$\mu(\mathbf{X})$ (see \eqref{7.5}). The probability that the global optimum $\mathbf{X}_{\min}$
is not found after $N$ simulations is $o(q^{N})$ for some $q<1.$
\end{theo}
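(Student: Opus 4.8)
The plan is to observe that the algorithm defines a time-homogeneous Markov chain on the \emph{finite} state space $\mathcal{S}$ consisting of all matrices obtained from the initial matrix by permuting each column independently, and then to invoke the classical ergodic theorem for finite chains together with the standard geometric-tail estimate for hitting times. Both assertions reduce to two structural facts: the chain is irreducible on $\mathcal{S}$ and aperiodic.

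First I would verify irreducibility. Take the partition $\Pi=\{1,\dots,n\}\setminus\{j\}$ with $\bar{\Pi}=\{j\}$, which is admissible since $n\geq 2$; the proposal step then rearranges only column $j$. Because the driving variables $Y_1,\dots,Y_m$ in \eqref{gumbel} have a density that is everywhere positive, every permutation $s^{\ast}$ of $\{1,\dots,m\}$ is proposed with strictly positive probability, and under the standing assumption $0<\ell(\mathbf{X})<\infty$ the acceptance probability $\min(1,\ell(\mathbf{X})/\ell(\mathbf{Y}))$ is strictly positive as well. Hence from any state one can, in a single step, move to any state differing from it in one column only; composing at most $n$ such single-column moves connects any two elements of $\mathcal{S}$, so the chain is irreducible. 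Aperiodicity is immediate because with $\bar{\Pi}=\{j\}$ the identity rearrangement is itself proposed with positive probability (and rejected proposals also leave the chain in place), so $P_{\mathbf{X},\mathbf{X}}>0$ at every state. A finite, irreducible, aperiodic chain is ergodic, so \eqref{7.5} has a unique solution $\mu$ and $P^{N}(\mathbf{X},\cdot)\to\mu$ (Feller, page 393); this is the first assertion.

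For the second assertion, set $\tau=\inf\{k\geq 0:\mathbf{X}_{k}=\mathbf{X}_{\min}\}$. By the one-step reachability established above, from \emph{any} state $\mathbf{x}\in\mathcal{S}$ the chain reaches $\mathbf{X}_{\min}$ within $n$ steps with probability at least some $\delta>0$ (take the product of the relevant one-step probabilities, each bounded below because $\mathcal{S}$ is finite, and pad with self-loops to make exactly $n$ steps). Splitting the first $N$ steps into $\lfloor N/n\rfloor$ consecutive blocks of length $n$ and applying the Markov property across blocks gives $P(\tau>N)\leq(1-\delta)^{\lfloor N/n\rfloor}\leq(1-\delta)^{-1}q_{0}^{\,N}$ with $q_{0}=(1-\delta)^{1/n}<1$, so $P(\tau>N)=O(q_{0}^{\,N})$; since for any $q\in(q_{0},1)$ one has $P(\tau>N)/q^{N}\leq(1-\delta)^{-1}(q_{0}/q)^{N}\to 0$, it follows that $P(\tau>N)=o(q^{N})$ for such $q$, which is exactly the claim. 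Equivalently, one may restrict the transition matrix to $\mathcal{S}\setminus\{\mathbf{X}_{\min}\}$ and note that this substochastic matrix has Perron root $q_{0}<1$, which governs $P(\tau>N)$.

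The only point genuinely requiring care — the main obstacle — is the irreducibility step: even though a single move applies the \emph{same} permutation to all columns of the chosen block $\bar{\Pi}$, one must check that the admissibility of singleton blocks $\bar{\Pi}=\{j\}$ still allows arbitrary independent column permutations to be realized, so that the set of states reachable from the initial matrix is exactly $\mathcal{S}$ and in particular contains $\mathbf{X}_{\min}$. If instead $f(\mathbf{X}_{\min})=0$, so that $\ell(\mathbf{X}_{\min})=\infty$ and $\mathbf{X}_{\min}$ is an absorbing state, then irreducibility on all of $\mathcal{S}$ fails, but the same argument applies verbatim to the transient class $\mathcal{S}\setminus\{\mathbf{X}_{\min}\}$, the geometric bound on the absorption time being even more standard.
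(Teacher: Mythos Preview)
Your proof is correct and follows the same route as the paper, which simply appeals to standard results for finite ergodic Markov chains (citing Feller for existence of the stationary law and \c{C}inlar for the geometric rate); you supply the details the paper elides, namely the verification of irreducibility via singleton blocks $\bar\Pi=\{j\}$ and of aperiodicity via positive self-loop probability. For the $o(q^{N})$ claim you give a direct blocking/hitting-time argument rather than invoking the geometric convergence rate of $P^{N}$ to $\mu$, but this is a cosmetic difference: both arguments are standard and yield the same conclusion.
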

\begin{proof}
The proof is a consequence of well-known results concerning the convergence of a finite state ergodic Markov Chain. For the geometric rate of convergence to the stationary distribution, see for example \cite{Cinlar}.
\end{proof}

We run this algorithm using as starting matrix $B_{1},$ the matrix discussed
earlier, for which for all $7$ possible partitions $\Pi,\bar{\Pi}$ we have
that $S_{\Pi},S_{\bar{\Pi}}$ are countermonotonic so that $\phi\left(
\sum_{i\in\Pi}\mathbf{X}_{i},\sum_{i\in\bar{\Pi}}\mathbf{X}_{i}\right)  =-1.$
This is a local optimum for the Block RA. The variance of the row sums is
0.04346.
\begin{equation}
B_{1}=\left(
\begin{array}
[c]{cccc}%
0.0662 & 0.2571 & 0 & -0.5821\\
0.3271 & 1.0061 & -1.3218 & -0.0833\\
0.6524 & -0.6509 & -0.0549 & 0.2495\\
1.0826 & -0.9444 & 0.9248 & -0.9263
\end{array}
\right)  . \label{B2b}%
\end{equation}

%

\begin{figure}
[!htbp]\label{F0}
\begin{center}
\includegraphics[
height=2.1361in,
width=5.4284in
]%
{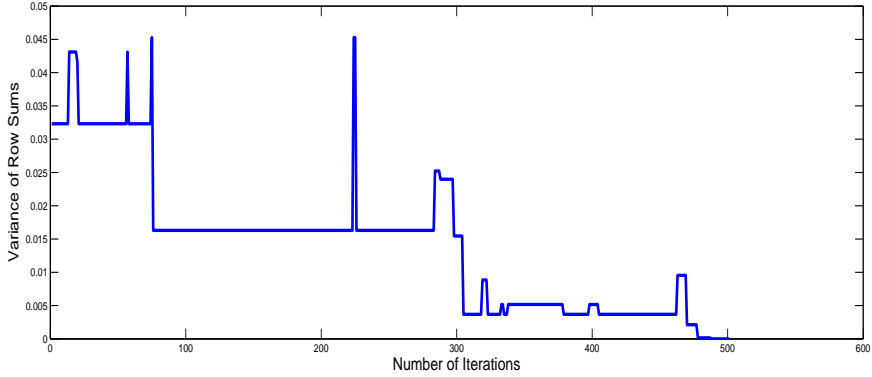}%
\end{center}
\caption{Trajectory of the above algorithm for the
initial matrix $B_1$}
\end{figure}

Figure \ref{F0} illustrates the trajectory of the above algorithm for this
initial matrix. Note that it successfully climbs out of local valleys and in
less than 500 steps is able to find the matrix corresponding to the global
minimum variance of 0. Of course the number of steps required to find this
optimum is, in general, random but if we were to use enumeration, we would
require evaluating the rows sums over a sample space of $(4!)^{3}=13824$
different matrices.

The preceding example is somewhat atypical of the performance of the algorithm
because the minimum variance is $0$ and eventually this Markov Chain is
absorbed in this state. When the minimum is strictly positive, the chain tends
to fluctuate around its equilibrium distribution described by Theorem \ref{t3}
above. 

For example, suppose we begin with the matrix $\mathbf{X}$ below, which was
obtained by generating the first column as ordered $\mathcal{U}[0,1]$ variables and the
second and third columns are random permutations of the first.%

\[
\mathbf{X}=\left(
\begin{array}
[c]{ccc}%
0.0074 & 0.8657 & 0.8574\\
0.2957 & 0.2957 & 0.3569\\
0.3569 & 0.6067 & 0.6067\\
0.4638 & 0.8574 & 0.4850\\
0.4850 & 0.0074 & 0.2957\\
0.6067 & 0.4638 & 0.8657\\
0.8574 & 0.4850 & 0.4638\\
0.8657 & 0.3569 & 0.0074
\end{array}
\right)
\]
In this case, the minimizing matrix is
\[
{\mathbf{X}}_{\min}=\left(
\begin{array}
[c]{ccc}%
0.0074 & 0.8657 & 0.6067\\
0.2957 & 0.8574 & 0.3569\\
0.3569 & 0.2957 & 0.8574\\
0.4638 & 0.4638 & 0.4850\\
0.4850 & 0.4850 & 0.4638\\
0.6067 & 0.0074 & 0.8657\\
0.8574 & 0.3569 & 0.2957\\
0.8657 & 0.6067 & 0.0074
\end{array}
\right)
\]
with variance of the row sums equal to 0.0012.

The trajectory in Figure \ref{Ffin} clearly
shows the fluctuations around a stationary distribution in the variance of the
row sums over these 5,000 iterations.%

\begin{figure}
[!hptb]\label{Ffin}
\begin{center}
\includegraphics[
height=2.1283in,
width=5.38in
]%
{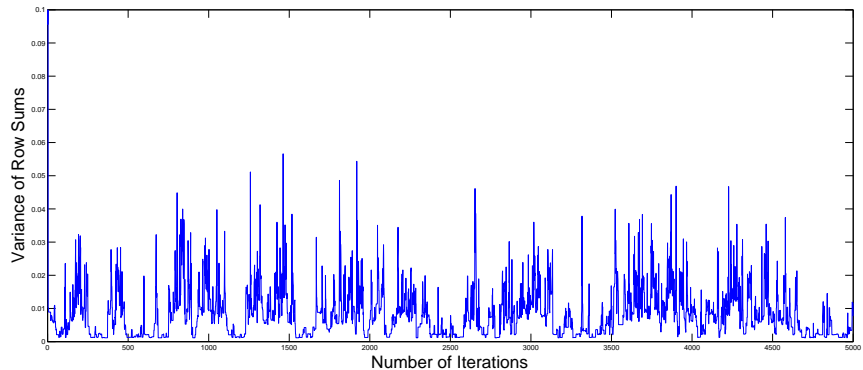}%
\end{center}
\caption{Fluctuations around a stationary distribution}
\end{figure}
%



\section{Conclusions}

This paper proposes an improved rearrangement algorithm and a stopping rule.
It can efficiently find the dependence structure that minimizes the variance
of the sum of $n$ dependent variables. It is thus able to infer the dependence
between $n-1$ variables such that the last variable is equal to the sum of the
$n-1$ first variables. As already discussed extensively in the introduction, this idea is useful in identifying the optimal
structure to achieve the Value-at-Risk bounds with a variance constraint where
the aggregate risk that maximizes and minimizes the Value-at-Risk is a
two-point distribution (\cite{Bernard-Ru-Vanduffel-2014}). This idea can also be exploited in finance to infer the joint distribution among assets for which prices of spread option or basket options are available. 

\section*{Acknowledgments}
C. Bernard thanks
the CAE research grant as well as the Alexander von Humboldt foundation and the hospitality
of the chair of Mathematical Statistics in Munich where a first draft of this paper was
completed in 2014. D. McLeish acknowledges support from NSERC. We would also like to
thank Ren\'e Carmona, Emmanuel Gobet, Giovanni Puccetti, Chris Rogers, Steven Vanduffel, Ruodu Wang and Ralf Werner for
suggestions and helpful discussions on earlier drafts of this paper.

\bibliographystyle{econometrica}
\bibliography{references}

\end{document}